%% file: arxiv.tex
\documentclass[a4paper,pdfa,USenglish,cleveref,thm-restate]{lipics-v2021}

\graphicspath{{./graphics/}}

\title{Chunky Chains: Graph Drawings on Small Screens}

\author{Tim Hegemann}{Universität Würzburg, Germany}{hegemann@informatik.uni-wuerzburg.de}{https://orcid.org/0009-0008-4770-3391}{}
\author{Dominik Jilg}{Universität Würzburg, Germany}{dominik.jilg@uni-wuerzburg.de}{https://orcid.org/0009-0004-4527-9583}{}
\author{Marie Diana Sieper}{Universität Würzburg, Germany}{marie.sieper@uni-wuerzburg.de}{https://orcid.org/0009-0003-7491-2811}{}
\author{Samuel Wolf}{Universität Würzburg, Germany}{samuel.wolf@uni-wuerzburg.de}{https://orcid.org/0009-0009-7098-6147}{}

\authorrunning{T. Hegemann, D. Jilg, M. D. Sieper, and S. Wolf} %TODO mandatory. First: Use abbreviated first/middle names. Second (only in severe cases): Use first author plus 'et al.'

\Copyright{Tim Hegemann, Dominik Jilg, Marie Diana Sieper, Samuel Wolf} %TODO mandatory, please use full first names. LIPIcs license is "CC-BY";  http://creativecommons.org/licenses/by/3.0/

\ccsdesc[500]{Theory of computation~Graph algorithms analysis}
\ccsdesc[500]{Human-centered computing~Graph drawings}

\keywords{Hybrid visualization, chord-link, chord diagram, chunky chains,
        bucketwidth, bandwidth, bucket integrity, experimental analysis}

\supplementdetails[cite=supmat]{Collection}{https://chunkychains.github.io}
\supplementdetails[subcategory={Source Code}]{Software}{https://github.com/hegetim/chunky-chains}

\hideLIPIcs

\acknowledgements{
    We thank Vasil Alistarov for many fruitful discussions and his prior work on
    how to avoid node duplication in ChordLink~\cite{a-irclm-uniwue25}.}

\nolinenumbers %uncomment to disable line numbering

\EventEditors{John Q. Open and Joan R. Access}
\EventNoEds{2}
\EventLongTitle{42nd Conference on Very Important Topics (CVIT 2016)}
\EventShortTitle{CVIT 2016}
\EventAcronym{CVIT}
\EventYear{2016}
\EventDate{December 24--27, 2016}
\EventLocation{Little Whinging, United Kingdom}
\EventLogo{}
\SeriesVolume{42}
\ArticleNo{23}
\usepackage[table]{xcolor}
\usepackage{adjustbox}
\usepackage[size=footnotesize]{todonotes}
\usepackage{xspace}
\usepackage{complexity}
\usepackage{cite}
\usepackage{booktabs}
\usepackage{optidef}
\usepackage{longtable}

\usepackage[linesnumbered,algoruled,longend,vlined]{algorithm2e}
\DontPrintSemicolon
\SetArgSty{textnormal}
\SetProgSty{textnormal}
\SetFuncArgSty{textnormal}
\SetProcNameSty{textnormal}
\SetProcArgSty{textnormal}
\SetKwInput{Input}{Input}
\SetKwInput{Output}{Output}

\usepackage{apptools}
\newcommand{\restateref}[1]{\IfAppendix{\hyperref[#1]{$\star$}}{\hyperref[#1*]{$\star$}}}

\theoremstyle{definition}
\newtheorem{problem}{Problem}
\crefname{problem}{Problem}{Problems}
\Crefname{problem}{Problem}{Problems}
\newcommand{\probname}[1]{{\normalfont\textsc{#1}}\xspace}
\newcommand{\CChain}{bucket arrangement\xspace} % why?

\newcommand{\BW}{\probname{Bucketwidth}}
\newcommand{\BI}{\probname{Bucket Integrity}}
\newcommand{\OBI}{\probname{Ordered Bucket Integrity}}

\newcommand{\pl}[1]{\ensuremath{\pi^{\text{L}}_{#1}}}
\newcommand{\pr}[1]{\ensuremath{\pi^{\text{R}}_{#1}}}
\newcommand{\vtild}{\ensuremath{\widetilde{v}}}
\newcommand{\dmin}{\ensuremath{\delta_{\mathrm{min}}}\xspace}
\newcommand{\tfspaths}{$\{3,5,7\}$-paths\xspace}
\newcommand{\cbullet}[1]{\tikz{\fill[#1] (0,0) circle (0.5ex);}}
\newcommand{\crect}[1]{\tikz{\fill[#1] (0,0) rectangle (1ex,1.5ex);}}
\newcommand{\fakemarko}[1]{\tikz{\draw[#1,thick] (0,0) circle (0.5ex);}}
\newcommand{\fakemarkp}[1]{\tikz{\draw[#1,thick] (0,0.5ex) -- (1ex,0.5ex) (0.5ex,0) -- (0.5ex,1ex);}}
\newcommand{\fakemarkx}[1]{\tikz{\draw[#1,thick] (0,0) -- (1ex,1ex) (0,1ex) -- (1ex,0);}}
\newcommand{\fakeareas}[1]{\tikz{\draw[#1,thick] (0,0) rectangle (2ex,1.33ex);}}

\newcommand{\fakelines}[1]{\tikz{\draw[#1,thick] (0,0) -- (2ex,1.33ex);}}
\newcommand{\fakelined}[1]{\tikz{\draw[#1,thick,densely dotted] (0,0) -- (2ex,1.33ex);}}

\newcounter{tkeycnt}
\makeatletter
\newcommand{\tkey}[1]{%
    \refstepcounter{tkeycnt}%
    \def\@currentlabel{#1}%
    \def\cref@currentlabel{[tkey][\the\value{tkeycnt}][131072]#1}%
    \label{tkey:#1}%
    \texttt{(#1)}}
\makeatother

\crefformat{tkey}{#2\texttt{(#1)}#3}
\crefrangeformat{tkey}{\errmessage{formatting as range not supported for tkey}}
\crefmultiformat{tkey}{#2\texttt{(#1)}#3}{ and #2\texttt{(#1)}#3}
    {, #2\texttt{(#1)}#3}{ and #2\texttt{(#1)}#3}

\newcolumntype{R}[2]{%
    >{\adjustbox{angle=#1,lap=\width-(#2)}\bgroup}%
    l%
    <{\egroup}%
}
\newcommand*\rot{\multicolumn{1}{R{45}{1em}}}

\newclass{\XNLP}{XNLP}

\definecolor{PKdarkred}{rgb}{0.89 0.102 0.109}
\definecolor{PKdarkblue}{rgb}{0.121 0.47 0.705}
\definecolor{PKdarkgreen}{rgb}{0.2 0.627 0.172}
\definecolor{PKdarkorange}{rgb}{1 0.498 0}
\definecolor{PKdarkpurple}{rgb}{0.415 0.239 0.603}
\definecolor{PKdarkyellow}{rgb}{1 1 0.2}
\definecolor{PKdarkbrown}{rgb}{0.651 0.337 0.157}
\definecolor{PKdarkpink}{rgb}{0.969 0.506 0.749}
\definecolor{PKdarkcyan}{rgb}{0.106 0.62 0.467}
\definecolor{PKdarkgray}{rgb}{0.5 0.5 0.5}
\definecolor{PKlightred}{rgb}{0.984 0.603 0.6}
\definecolor{PKlightblue}{rgb}{0.651 0.807 0.89}
\definecolor{PKlightgreen}{rgb}{0.698 0.874 0.541}
\definecolor{PKlightorange}{rgb}{0.992 0.749 0.435}
\definecolor{PKlightpurple}{rgb}{0.792 0.698 0.839}
\definecolor{PKlightyellow}{rgb}{1 1 0.8}
\definecolor{PKlightbrown}{rgb}{0.898 0.847 0.741}
\definecolor{PKlightpink}{rgb}{0.992 0.855 0.925}
\definecolor{PKlightcyan}{rgb}{0.553 0.827 0.78}
\definecolor{PKlightgray}{rgb}{0.8 0.8 0.8}

\definecolor{DEUred}{rgb}{1 0 0}
\definecolor{USAblue}{rgb}{0.098 0.133 0.251}

\definecolor{dark blue}{rgb}{0.121,0.47,0.705}
\let\emph\relax\DeclareTextFontCommand{\emph}{\color{dark blue}\em}

\begin{document}
\maketitle

\begin{figure}[h]
    \includegraphics{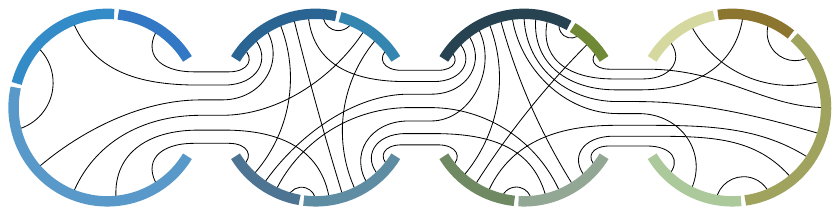}
    \caption{Chunky Chain visualization of spatial color co-occurrences in
            Vincent van Gogh's “Starry Night Over the Rhône”.  When turned
            90°, the drawing fits the narrow screen of a smartphone.}
    \label{fig:teaser}
\end{figure}

\begin{abstract}
    We introduce \emph{Chunky Chains}, a graph drawing style designed for small
    screens such as smartphones, where vertical scrolling is the dominant means
    of interaction.  A Chunky Chain consists of a vertical chain of chord diagrams,
    called \emph{buckets}.  Vertices are placed as circular arcs on bucket
    boundaries, and edges are drawn inside a bucket or through \emph{gate nodes}
    connecting consecutive buckets. Since every bucket contains only a bounded
    number of vertices, the drawing has bounded width.
    The combinatorial core is the choice of a \emph{bucket arrangement}. Given a
    \emph{capacity}~$c$, the vertices are partitioned into an ordered set of buckets,
    each of size at most~$c$.  Edges whose endpoints lie in the same or in adjacent
    buckets are \emph{short}.  Edges that are “skipping” at least one bucket are
    \emph{long}, and we draw them only partially. The goal is to minimize the number
    of long edges.

    We present a combinatorial framework for producing high quality Chunky Chains
    and analyze the complexity of its steps.  We develop exact and
    heuristic algorithms, and experimentally evaluate their effectiveness. 
    Our experiments show that many real-world graphs have good Chunky Chain 
    visualizations.
    In a case study, we discuss Chunky Chains for graphs with certain temporal
    features.
\end{abstract}

\section{Introduction}

Smartphone screens are arguably the most common type of screen that
displays information nowadays.
Due to their limited size, content rarely fits entirely on the screen,
requiring users to explore it through zooming, panning, and (vertical)
scrolling. In many applications, such as social media feeds, scrolling
is the dominant mode of navigation.

Following this trend, graph visualizations with small width (but possibly
large height) are natural to consider.
Hence, the task \tkey{T1} is to visualize graphs on narrow screens in a linear
fashion that promotes vertical scrolling to explore the drawing.
To this end, we present \emph{Chunky Chains}.
Intuitively, a Chunky Chain is a graph drawing that places vertices on
a vertical chain of circles.
We call them \emph{buckets} and draw them as chord diagrams.
Adjacent vertices should ideally be assigned to the same or adjacent buckets.
See \cref{fig:teaser} for an example of a Chunky Chain that
visualizes the spatial color co-occurrences in
Vincent van Gogh's “Starry Night Over the Rhône” (see \cref{apx:colorgraphs}
for details). 
Naturally, such a constraint on the aspect ratio benefits the
visualization of sequential or weakly branching structures the most.
We aim to \tkey{T2} reveal these structures with Chunky Chains.
Interestingly, in \cref{fig:teaser} our algorithm unveils a hidden correlation
between the colors as our algorithm apparently ordered the colors from
blue, over teal and green, to ocher.

\subparagraph*{Problem Statement.}
From a combinatorial perspective, we want to find a
\emph{bucket arrangement} of a given graph $G$.
A bucket arrangement is a partition of $V(G)$ into~$k$
buckets $B_1,\dots,B_k$. 
Ideally, we want to find a bucket arrangement such that each edge
is short. An edge~$\{u,v\}$ with $u\in B_i, v\in B_j$ is \emph{short}, 
if we have $|i-j|<2$.
The smallest \emph{width} $\max_i |B_i|$ of such a bucket arrangement
is a graph parameter known as \emph{bucketwidth} in the
literature. 

\begin{problem}[\BW]\label{def:bw}
    Given a graph $G$, find a bucket arrangement of~$G$ of minimum width that consists of only short edges.
\end{problem}

Unfortunately, the width of a screen (and potential space for 
labels) restricts the number of vertices that can be assigned to
a bucket. This requires an additional input parameter that limits 
the \emph{capacity} of each bucket.
Consequently, it may not be possible that every edge is short.
We call edges in a given bucket arrangement for which this fails \emph{long edges}. 
Since we want to avoid long edges, we obtain the following combinatorial
core of Chunky Chains.

\begin{problem}[\BI]\label{def:bi}
    Given a graph $G$ and a capacity~$c\in\mathbb{N}$,
    find a bucket arrangement of~$G$ with width at most~$c$
    and the minimum number of long edges.
\end{problem}

Note that this problem can be viewed as an edge-deletion distance problem
to the decision version of \BW.  In the following, we set $n=|V(G)|$
and $m=|E(G)|$ if the graph~$G$ is clear from
the context, and we write $N(v)$ for the neighborhood of a
vertex~$v$.
For a given bucket arrangement, $\mathcal{L}$ denotes the set of
long edges, and we set $\ell = |\mathcal{L}|$.

\subparagraph*{Related Work.}  
While an aspect ratio (width/height) close to 1 is often pursued in graph drawing,
technical and environmental constraints can push towards an unbalanced
aspect ratio even apart from digital screens. A historic example is the 
{\em{Tabula Peutingeriana}}~\cite{tabula-peutingeriana}, a Late Antique Roman road
map on an enormous parchment scroll measuring 33\,cm in height and 680\,cm in
length (a stunning aspect ratio of approximately 0.05).

Our design is inspired by ChordLink~\cite{admpt-clhvm-gd19,admpt-hgvcl-tvcg20},
a hybrid drawing style combining node-link and chord diagrams~\cite{ksb+-ciacg-gr09}.  
ChordLink was introduced by Angori, Didimo, Montecchiani, Pagliuca, and Tappini.  While
ChordLink integrates into an interactive workflow and tries to keep the layout
stable, our framework is designed to create a layout from scratch without user
input.  Kindermann, Sauer, and Wolff~\cite{ksw-cccl-jgaa23} proved that several
steps of the ChordLink pipeline are \NP-hard to compute.
In a user study with 82~participants, several tasks related to structural
properties of the network could be performed more accurate (albeit slower)
with ChordLink compared to classical node-link diagrams
\cite{ddmt-ushgv-gd21,ddlmt-sehvg-tvcg24}.
The drawings by Six and Tollis~\cite{st-fugcd-gd03} can be considered a precursor
of ChordLink.  They use circular layouts instead of chord diagrams.

Crossing reduction in circular layouts (or the equivalent task of reducing
crossings between chords in chord diagrams) has been proven to be \NP-hard by
Masuda, Kashiwabara, Nakajima, and Fujisawa~\cite{mknf-npcnlp-iscas87}.
Among many others, 
Klawitter, Mchedlidze, and Nöllenburg~\cite{kmn-eebda-gd17} presented heuristics
for this problem.  We built our heuristic for crossing minimization in Chunky Chains
(see \cref{ssec:crossings}) upon their algorithm.  Our design of gate nodes,
that allow us to avoid the node duplication performed by ChordLink, is inspired
by Baur and Brandes' multi-circular layout~\cite{bb-mclmmg-gd07}.
We draw long edges only partially.
See~\cite{n-clnpgd-20} for an overview of partial edge drawings.

We are not the first ones to consider graph drawings on smartphone screens.
Da Lozzo, Di Battista, and Ingrassia~\cite{ddi-dgs-gd10,ddi-dgs-jgaa12} draw the
neighborhood of a user-selected focus vertex and explore smartphone-specific
interaction methods.  Aulbach, Fink, Schuhmann, and Wolff~\cite{afsw-dgra-gd14}
select a maximum-weight subgraph that can be drawn inside the restricted area of
a small screen.  For planar graphs, only polynomially
bounded coordinate-precision is required to draw it entirely inside a fixed
polygon~\cite{cegl-pofpa-gd10,cegl-pofpa-jgaa12}.
Another way to tackle the issue of small screens is to limit the aspect ratio.
Efficient algorithms are known for binary trees~\cite{gr-btlaaar-gd02,gr-btlaaar-jgaa04}.
Heuristic methods have been presented for layered
drawings~\cite{resh-gdglp-gd16,resh-glafda-jgaa17} and, very recently,
orthogonal layouts~\cite{ark+-arcol-eurovis26}.
In principle, all linear layouts can be drawn with bounded width (depending on
the edge geometry).  See~\cite{dfgs-llr-esa25} for an overview of linear layouts.

On the combinatorial side, \BI is a generalization of
\textsc{Bucketwidth} (see \cref{def:bw,def:bi}).
\BW has mainly been investigated as a means to obtain approximation results
for \textsc{Bandwidth} since the bucketwidth of a graph~$G$ is
within a factor of~2 of the bandwidth of~$G$~\cite{u-cabp-focs98,dfu-hrab-jcss11,%
ft-abc-approx05,ft-atboc-algo09,fgk-et2ab-ipec09,fgk-aet2aafb-tcs13}.
Aside from applications for \textsc{Bandwidth}, \BW is
related to several width measures in parameterized complexity
theory. In particular, it is the pendant to tree-partition
width~\cite{w-otpw-eujc09,bgj-pcctp-ipec22,bgj-otpcoctp-dmtcs25}
in the same way as pathwidth is the pendant to treewidth.
Tree-partition width is defined analogously to \BW but the
quotient graph of the partition is allowed to be a tree.
Several intractable problems for treewidth become tractable for
tree-partition width~\cite{bcw-phftbefsg-wg22}.
Partition width measures are also important in graph drawing.
For example,
for drawing graphs with few slopes~\cite{dsw-gdwfs-jocg07}, computing straight
line 3d grid graphs with small volume~\cite{dm-tdgcqn-gd03,dlm-csl3dgdogilv-jocg05},
and bounding several notions of crossing numbers~\cite{wt-pdatcnogwaem-gd07}.

\subparagraph*{Contribution and Organization.}
We introduce Chunky Chains, a drawing style that aims to visualize graphs on
narrow screens. We start with a definition and quality metrics 
in~\cref{sec:cchain-method}.
We investigate theoretical aspects of the underlying combinatorial problem and
practical approaches, culminating in a pipeline for drawing Chunky Chains.
In particular, we analyze the (parameterized) complexity of \BI
and propose an \FPT-algorithm with respect to the vertex cover number,
and an exact algorithm which is tight under the exponential time hypothesis.
Furthermore, we consider a tractable variant of \BI called \OBI 
motivated by heuristics (see \cref{sec:combi}). 

The pipeline for drawing Chunky Chains consists of three steps; see
\cref{sec:pipeline}.
Firstly, vertices are assigned to buckets. Here we devise an ILP and heuristics.
Secondly, we compute an ordering of the vertices within the buckets to
minimize crossings.
Finally, we consider geometric aspects of drawing Chunky Chains. 
We conclude with a case study (\cref{sec:case-study}),
an experimental evaluation (\cref{sec:experiments}), and open questions (\cref{sec:conclusion}).

\section{The Chunky Chain Method}\label{sec:cchain-method}

We outline the functional requirements of a Chunky Chain, describe its structural
design, and identify the key metrics that quantify its effectiveness.

\begin{figure}[tb]
    \centering
    \includegraphics[page=2]{watzmann.pdf}
    \caption{Anatomy of a Chunky Chain of a graph $G$ with buckets $B_1, B_2, B_3$.  
    Vertices of $G$ are circular arcs (in orange) with labels at the boundary.
    We route edges that connect two vertices in adjacent buckets through two
    gate nodes (indicated in light blue).  However, we usually do not draw the
    gate nodes.  The long edge $\{\mathsf{e,p}\}$ is drawn only partially; the
    respective labels are gray.}
    \label{fig:method}
\end{figure}

\subparagraph*{Requirements.}
Because we are targeting smartphone screens, we do not want to burden our users
with horizontal scrolling~\cref{tkey:T1}.  Therefore, we require that our drawings have
\tkey{R1}~bounded width.  Precisely, we require it to be polynomially bounded
in some user-defined parameter.  Furthermore, for a comprehensible drawing, we
require \tkey{R2} vertices to be unobstructed, unambiguous, and (optionally)
labeled.
It is difficult to follow long edges, especially if they make many turns.
Therefore, we require \tkey{R3} constant edge complexity.  Our edges are
sequences of segments where each segment is either a straight line segment,
a circular arc, or a cubic Bézier curve.  We require edges to be comprised of a
constant number of segments.

\subparagraph*{Methodology.}
A Chunky Chain is a series of chord diagrams.  We call the set of vertices that
is drawn as one chord diagram a \emph{bucket}.  In a chord diagram every vertex
is represented by a circular arc.  All vertices of a bucket lie on a circle
and each circle has the same radius.
We bound the width of the chord diagrams (and therefore the width of the drawing)
with an input parameter~$c$ which limits the size of each bucket~\cref{tkey:R1}.
Vertex labels can be attached to each vertex at the boundary of the
diagram~\cref{tkey:R2} where they do not obstruct other parts of the drawing.
An edge between two vertices in the same bucket is represented by a curve
inside the circle that connects the two arcs of the vertices.
Edges between two vertices of adjacent buckets are routed through two \emph{gate nodes}, 
special vertices (that we usually do not draw) at the top and bottom of each chord diagram.
Long edges are partially drawn using two arrows starting at the incident vertices,
respectively, and a corresponding pair of labels.
See \cref{fig:method} for illustration 
(rotated by $90^\circ$ to better utilize the space on the page)
and \cref{fig:case-study} for a real-world example.

\subparagraph*{Quality Metrics.}
The Chunky Chains method naturally leads to some quality metrics.  First and
foremost, \tkey{M1} the number of fully drawn edges (i.e., the number of short
edges) should be as high as possible.  
We also prefer \tkey{M2} few gate transitions in order to reduce the risk of
loosing track in crowded areas and because of the limited space in gate nodes.

Many common graph drawing metrics apply to Chunky Chains.  We can equivalently
state~\cref{tkey:M2} as preferring shorter edges because edges that stay inside
a bucket are shorter than those connecting adjacent buckets (and shorter edges
are easier to follow and often mean less scrolling).  Furthermore, we want to
keep \tkey{M3} the number of crossings in our drawings low.

\section{Combinatorial Results}\label{sec:combi}

We investigate the (parameterized) complexity of \BI and derive
useful characteristics of bucket arrangements, starting with an
upper bound on the number of buckets for optimal bucket arrangements. 
This bound is of interest because it limits the vertical extent of a Chunky Chain, 
and thus the amount of scrolling required in the worst case.
Also, this upper bound is a crucial ingredient in our integer linear programming (ILP)
formulation (see \cref{ssec:arrangement}). 
Later we consider a variant of \BI called \OBI which turns out to be
useful for heuristic algorithms.

\begin{lemma}
    \label{lem:num-buckets-upper-bound}
    Let $(G, c)$ be an instance of \BI.
    Any optimal \CChain of $G$ uses at most
    $\lfloor\frac{3n}{2c}\rfloor$ many buckets
    and $\lfloor\frac{3n}{2(c+2)}\rfloor$ are sometimes required.
\end{lemma}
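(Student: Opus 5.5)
The upper bound rests on normalizing an optimal \CChain by local moves that never increase the number of long edges. The lower bound needs a concrete family of graphs.

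\textbf{Reading of ``any''.} Empty buckets can always be inserted without changing the number of long edges, so the upper bound cannot literally hold for every optimal arrangement. I would prove it for an optimal arrangement that uses the fewest buckets, and remark that every optimal arrangement can be normalized to such a one.

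\textbf{Move 1: merging neighbours.} Take an optimal \CChain $B_1,\dots,B_k$ with $k$ minimal. Suppose $|B_i|+|B_{i+1}|\le c$ for some $i$. Replace $B_i,B_{i+1}$ by their union and shift the later indices down by one.
\begin{itemize}
\item For each edge, the index distance between its endpoints never increases: it stays the same or drops by one.
\item Hence no short edge becomes long, the arrangement stays optimal, and it has one bucket fewer. This contradicts the minimality of $k$.
\end{itemize}
So $|B_i|+|B_{i+1}|\ge c+1$ for all $i$. Removing an empty bucket is a special case of this merge.

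\textbf{Move 2: dissolving a middle bucket.} Move 1 alone only gives $k\lesssim 2n/(c+1)$, so a second move is needed to reach average bucket size about $2c/3$.

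Consider three consecutive buckets $B_{i-1},B_i,B_{i+1}$ with total size at most $2c$. Distribute the vertices of $B_i$ into $B_{i-1}$ and $B_{i+1}$ and delete the (now empty) $B_i$, shifting later indices down by one.
\begin{itemize}
\item Every vertex of $B_i$ with a neighbour in $B_{i+2}$ goes to $B_{i+1}$.
\item Every vertex of $B_i$ with a neighbour in $B_{i-2}$ goes to $B_{i-1}$.
\item All remaining vertices of $B_i$ are placed arbitrarily, subject to capacity.
\end{itemize}
Edges between $B_{i-1}$ and $B_{i+1}$ become short, and no other short edge becomes long. The total-size hypothesis is what leaves enough room in the two neighbours.

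The main obstacle is a vertex of $B_i$ with neighbours in both $B_{i-2}$ and $B_{i+2}$, together with the capacity bookkeeping for the forced placements. For such vertices I would try a charging argument: each such vertex, wherever it goes, turns one short edge into a long one. I would pay for this with the edges between $B_{i-1}$ and $B_{i+1}$ that become short. If that fails, I would switch to a different local move: shift a suffix of vertices between consecutive buckets so as to fill $B_{i-1}$ to capacity before dissolving $B_i$.

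\textbf{Counting.} Granting that every window of three consecutive buckets holds at least $2c+1$ vertices, partition the buckets into $\lfloor k/3\rfloor$ disjoint triples plus at most two leftover buckets. The leftover buckets are covered by the pair condition from Move 1. Summing gives $n\ge \tfrac{2c}{3}k$ up to the boundary terms, i.e.\ $k\le\lfloor 3n/(2c)\rfloor$.

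\textbf{Lower bound.} I would build a ``thick path'' of cliques whose sizes alternate between $c/2+1$ and smaller pieces, with consecutive cliques joined completely. The design goals are these:
\begin{itemize}
\item Two big cliques cannot share a bucket, since together they have $c+2>c$ vertices.
\item A clique that is split across buckets forces long edges.
\item Any arrangement with fewer buckets is therefore forced to create an additional long edge.
\end{itemize}
This should give average bucket load $\tfrac{2}{3}(c+2)$, hence $\lfloor 3n/(2(c+2))\rfloor$ buckets. Verifying that no cleverer arrangement beats it would again go through the same local exchange arguments as above.
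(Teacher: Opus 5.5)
\textbf{The counting step fails, and it cannot be repaired: the floor bound is false as stated.} Your two moves are valid (Move~2 is even simpler than you think, see below). They give exactly the local conditions you want: in an optimal arrangement with the fewest buckets, any two consecutive buckets hold at least $c+1$ vertices, and any three consecutive ones hold at least $2c+1$. These conditions do not imply $n\ge\frac{2c}{3}k$, however. The ``boundary terms'' are the whole issue, since for $k=3t+r$ there are only $t=\lfloor k/3\rfloor$ disjoint triples.

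Here is a concrete counterexample. Let $c\ge 4$, take a clique on $X\cup Y$ with $|X|=|Y|=c$, and add a vertex $a$ adjacent to all of $X$ and a vertex $d$ adjacent to all of $Y$. Then $n=2c+2$ and $\lfloor 3n/(2c)\rfloor=3$. The arrangement $\{a\},X,Y,\{d\}$ has no long edge. In any arrangement without long edges, the clique fills two consecutive buckets completely. Then $a$ must lie just outside one of them, which forces that bucket to equal $X$ and the other to equal $Y$, and $d$ must lie just outside $Y$. So every optimal arrangement uses four buckets, of sizes $(1,c,c,1)$, and these sizes satisfy both of your local conditions. (More trivially, the bound is $0$ for every nonempty graph with $n<2c/3$.)

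The paper's proof makes the same slip: it uses $a/3\ge n/(2c)$ disjoint triplets where only $\lfloor a/3\rfloor$ exist. What your conditions do prove is $k<3n/(2c)+1$, i.e., $k\le\lceil 3n/(2c)\rceil$. To see this, split on $r\in\{0,1,2\}$, using $n\ge(2c+1)t+1$ for $r=1$ and $n\ge(2c+1)t+c+1$ for $r=2$. For $r=2$, your Move~1, which the paper does not have, is what handles the leftover pair.

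\textbf{Two further gaps.} First, the ``main obstacle'' in Move~2 comes from misreading the definition. An edge from $B_i$ to $B_{i\pm2}$ has index difference $2$ and is already long. Dissolving $B_i$ into the free slots of $B_{i-1}$ and $B_{i+1}$ in \emph{any} way keeps every short edge short. The new index difference of two vertices never exceeds the old one, except that two former $B_i$-vertices may land in adjacent buckets. The only capacity requirement is the triple-sum hypothesis. So the forced placements, the charging argument and the suffix shift are unnecessary, but as written you leave this step unproved.

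Second, the lower bound is not established: you give no concrete family and defer exactly the verification that constitutes the proof. The paper uses a chain of $\Lambda'$ cliques $C_1,\dots,C_{\Lambda'}$ of size $s=\lfloor 2c/3\rfloor+1$ with $C_i\cup C_{i+1}$ complete. The key property is $3s>2c$: if an interior clique $C_i$ were split, all of $C_{i-1}\cup C_i\cup C_{i+1}$ would have to fit into the two adjacent buckets meeting $C_i$. Hence each interior clique sits unsplit in its own bucket (as $2s>c$), and each end clique forces one further bucket. Finally, $\Lambda'>3n/(2(c+2))$ since $3s\le 2c+3$. Your sketch with pieces of size $c/2+1$ identifies no mechanism of this kind that pins the pieces to buckets.
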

\begin{proof}
    We first show that $\lfloor\frac{3n}{2c}\rfloor$ is indeed an
    upper bound on the number of buckets.
    Towards a contradiction, suppose there is a graph~$G$ and a capacity~$c$ 
    where every optimal solution requires at least
    $a := \lfloor\frac{3n}{2c}\rfloor +1$ many buckets.
    Subsequently, every solution has at least $a/3 \ge n/2c$ many (non-overlapping) 
    triplets of consecutive buckets.
    Thus, the average number of vertices in each triplet is at most $2c$, 
    implying that there must be a triplet with at most $2c$ vertices.
    This triplet can be compressed into two buckets by removing the middle
    bucket (and distributing the vertices in the middle bucket) which cannot
    introduce any long edges.
    This is a contradiction to the assumption that every optimal solution
    requires at least~$a$ buckets.

    For any capacity $c$, there exists a family $\mathcal{G}_{c}$ of graphs
    for which every optimal solution requires
    $\Lambda' = \lfloor\frac{3n}{2(c+2)}\rfloor$ many buckets.
    For a given capacity $c$ and a natural number $\Lambda' > 2$,
    the graph $G_{\Lambda'} \in \mathcal{G}_{c}$ consists of ${\Lambda'}$
    cliques $C_1, \dots, C_{\Lambda'}$
    of size $\lfloor 2c/3 \rfloor+ 1$,
    where for each $1\leq i \leq {\Lambda'}-1$, the induced subgraph
    of $C_i \cup C_{i+1}$ is a complete graph.
    
    Obviously, $\Lambda'$ many buckets suffice to construct a solution
    without any long edge.
    Suppose there exists a solution where the vertices of a clique $C_i$,
    $2 \leq i \leq \Lambda'-1$, are contained in two consecutive buckets,
    i.e., there are two vertices $u, v$ of $C_i$ that are in different
    buckets $B_u$, $B_v$. These vertices have $2c + 1$ many neighbors in
    common (other than themselves).
    Consequently, at least one neighbor is neither in $B_u$ nor in $B_v$,
    resulting in a long edge.
    Therefore, each $C_i$ with $2\leq i \leq \Lambda'-1$ is contained in
    one bucket.
    Technically, $C_1$, $C_{\Lambda'}$ can be distributed among two
    consecutive buckets, respectively. However, there exists at least
    one vertex of $C_1$ ($C_{\Lambda'}$ respectively) that is in a 
    bucket that only contains vertices of $C_1$ ($C_{\Lambda'}$ respectively).
    This follows from the fact that in a bucket that contains some clique
    $C_i$, the entire clique $C_i$ must be in this bucket, the clique
    $C_1$ ($C_{\Lambda'}$ respectively) can only be contained in at most
    two consecutive buckets, and two buckets cannot contain three cliques.
    Therefore, the number of buckets is at least $\Lambda'$.
\end{proof}

\subparagraph*{A (Parameterized) Complexity Overview.}
Since \BI is a generalization of \BW, we know that \BI is \NP-hard
due to the \NP-hardness of \BW. 
We show that there is no subexponential time algorithm for
\BI and propose an exact algorithm that matches this single-exponential
lower bound.

\begin{lemma}\label{lem:eth-lower-bound}
    Assuming the Exponential Time Hypothesis, \BI cannot be solved
    in $2^{o(n)}\cdot n^{O(1)}$ time.
\end{lemma}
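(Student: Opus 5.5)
A linear-size reduction from a problem that has no $2^{o(n)}$ algorithm under ETH should give the bound. The natural candidate is \textsc{Bisection} or \textsc{Max Cut}. I would rather reduce from \textsc{3-Coloring}, or more conveniently \textsc{3-Partition}-like problems. The cleanest choice is probably \textsc{Minimum Bisection}: under ETH it has no $2^{o(n)}$ algorithm, which follows from the standard linear reduction from \textsc{Max Cut}, itself ETH-hard via \textsc{3-SAT} and sparsification. In \textsc{Minimum Bisection} we are given a graph $H$ on $2k$ vertices and an integer $t$, and we ask whether $V(H)$ splits into two halves of size $k$ with at most $t$ crossing edges. The goal is an instance $(G,c)$ of \BI with $|V(G)|=O(|V(H)|)$ whose optimum number of long edges encodes the bisection cut size.

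The construction I would try is the following. Take $H$ and add a gadget that forces exactly three buckets, with $H$'s vertices split between the two outer buckets. Concretely, let $c=k+s$ and add three ``anchor'' cliques $A_1,A_2,A_3$, using ideas from the lower-bound family in the proof of \cref{lem:num-buckets-upper-bound}: consecutive cliques are fully joined, so the anchors must occupy a chain of buckets. Then make $A_2$ a clique of size $c$ filling the middle bucket, and let $A_1,A_3$ have size $s$. An edge of $H$ between the two outer buckets then skips the middle bucket and is long, while edges inside one outer bucket are short. Since $|V(H)|=2k$ and each outer bucket has free space exactly $k$, a solution corresponds to a bisection, and the long edges among $H$-edges are exactly the cut edges. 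To keep the anchors rigid, the anchor edges should be numerous, e.g.\ more than $|E(H)|$ of them would become long if any anchor vertex were misplaced. Taking anchor cliques of size $\Theta(k)$ keeps $|V(G)|=O(k)$.

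The main obstacle is the rigidity argument. I must show that an optimal arrangement with at most $t\le |E(H)|$ long edges really uses the intended three-bucket structure. It must not, for example, spread $H$ over many extra buckets, put some $H$-vertices into the middle bucket, or use more buckets so that cut edges become short. To handle the extra-bucket case I would attach every $H$-vertex to all of its side anchor or to $A_2$ with many parallel-forcing edges. A simpler alternative is to connect every $H$-vertex to all of $A_2$: then $H$-vertices must sit in buckets adjacent to $A_2$'s bucket, or else incur $|A_2|>|E(H)|$ long edges. Since $A_2$ fills its bucket, there are only two such buckets, and each has free capacity $k$. This pins the structure down. Given this, a counting argument shows that the optimum equals the minimum bisection, and with $n=O(k)$ a $2^{o(n)}$ algorithm for \BI would contradict ETH.
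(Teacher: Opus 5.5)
There is a genuine gap, and it is exactly in the rigidity step you flag. It is quantitative.

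\textbf{Size.} The \textsc{Minimum Bisection} instances produced by the reduction you cite (pad a \textsc{Max Cut} instance with isolated vertices, then complement) have $2^{o(k)}$-hardness in the number of vertices. That part is fine. But they are dense: $|E(H)|$ and the threshold $t$ are $\Theta(k^2)$. A misplaced vertex can make at most its degree many edges long. In your gadget, an $H$-vertex outside the window of $A_2$ pays at most $|A_2|=c=k+s$. So the requirement that a misplacement cost more than $|E(H)|$ (or more than $t$) forces $c$, and hence $n=|V(G)|$, to be $\Omega(k^2)$. A $2^{o(n)}$ algorithm for \BI then only yields a $2^{o(k^2)}$ algorithm for \textsc{Minimum Bisection}, which contradicts nothing.

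\textbf{Missing exchange argument.} With anchors of size $\Theta(k)$ you would instead need a genuine exchange argument. For example, shifting one $H$-vertex changes a cut by less than $2k$, so a per-deviation penalty of order $s-k\geq 2k$ could pay for it. That argument would have to cover every deviation:
\begin{itemize}
\item $A_2$ split over two buckets. Then ``$A_2$ fills its bucket'' is false, and an $H$-vertex outside the window may pay only a single long edge to $A_2$.
\item $A_1$ or $A_3$ split, at a cost of only $a(s-a)$.
\item $H$-vertices moved into the middle bucket in exchange for displaced $A_2$-vertices.
\item Additional buckets.
\end{itemize}
None of this is in the proposal.

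\textbf{Choice of $s$.} You also need $s>k$. The chain $A_1$--$A_2$--$A_3$ does not by itself put $A_1$ and $A_3$ on opposite sides of $A_2$. For $s=k$, the arrangement $B_1=A_1\cup A_3$, $B_2=A_2$, $B_3=V(H)$ has no long edges at all. For $s<k$, the analogous arrangement realizes an unbalanced cut with a side of size $k-s$ rather than a bisection.

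\textbf{A sparse repair.} Your gadget does work when $H$ is sparse. Suppose $s>k$ and $t<\min(c/2,s-1)$, and take an arrangement with at most $t$ long edges.
\begin{itemize}
\item Some $A_2$-vertex has no long edge, since at most $2t<c$ vertices of $A_2$ are incident to long edges.
\item That vertex is universal in a graph with exactly $3c$ vertices, so there are exactly three full buckets with it in the middle one.
\item Any $A_2$-vertex in an outer bucket would then cost $c$ long edges, so $B_2=A_2$.
\item Afterwards $A_1$ and $A_3$ cannot be split, and since $2s>c$ they lie in different outer buckets, so the long edges are exactly the cut edges of a bisection of $H$.
\end{itemize}
This route needs an ETH lower bound for \textsc{Minimum Bisection} on graphs with $O(k)$ edges, which your cited reduction does not supply.

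\textbf{The paper's route.} The paper avoids gadgets entirely by taking $c=1$. A bucket arrangement is then a linear order, and short edges join consecutive vertices. So there are at most $n-1$ short edges, with equality iff $G$ has a Hamiltonian path. Thus $G$ has a Hamiltonian path iff the optimum number of long edges is $|E(G)|-n+1$. Since the vertex set is unchanged, the $2^{o(n)}$ ETH lower bound for \textsc{Hamiltonian Path} transfers directly.
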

\begin{proof}
    Note that we can solve \textsc{Hamiltonian Path} using
    \BI by setting the capacity to 1.
    Then, a given graph $G$ has a Hamiltonian path if and only if
    the number of long edges is $|E(G)| - n + 1$.
    However, \textsc{Hamiltonian Path} cannot be solved in
    $2^{o(n)}\cdot n^{O(1)}$ unless the Exponential Time Hypothesis
    fails \cite{lms-lbeth-eatcs11}.
\end{proof}

Note that this lower bound can be reached by a dynamic program that
leverages the path structure of a \BI solution.

\begin{theorem}
    \label{thm:bi-exact-single-exp}
    Let $(G, c)$ be an instance of \BI.
    An optimal solution of $(G, c)$ can be computed in
    $O(5.6569^n)$ time and $O(2.8285^n)$ space.
\end{theorem}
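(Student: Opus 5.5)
The plan is a dynamic program over prefixes of the bucket sequence, made space-efficient by a meet-in-the-middle split. Fix an arrangement $B_1,\dots,B_k$. Whether an edge is long depends only on the indices of its endpoints' buckets. So when we append a bucket $C$ to a prefix whose vertex set is $A\cup B$, with $B$ the current last bucket, the newly created long edges are exactly the edges between $C$ and $A$. This suggests a forward DP over states $(A,B)$, where $A$ and $B$ are disjoint, $|B|\le c$, and $B\neq\emptyset$. The value $f(A,B)$ is the minimum number of long edges among vertices of $A\cup B$ over all arrangements of $A\cup B$ that end with bucket $B$. The recurrence is
\[
f(A\cup B, C) = \min_{B}\bigl(f(A,B) + |E(C,A)|\bigr),
\]
and a symmetric backward DP $g(A',B')$ handles suffixes whose first bucket is $B'$. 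On its own the forward DP has $3^n$ states and at most $4^n$ transitions, which is within the time bound but not the space bound.

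To reduce space to $O(2.8285^n)=O^*(2^{1.5n})$, I will only store states with $|A\cup B|\le n/2$. The number of such states is $\sum_{j\le n/2}\binom{n}{j}2^j\le 2^n\cdot 2^{n/2}=2^{1.5n}$, and the same count holds for suffix states. Transitions within a table only add sets to states that stay under the size threshold, so each table is filled in time at most the number of triples $(A,B,C)$ with $|A\cup B|\le n/2$. Crudely this is $O^*(4^n)$, which is within the target.

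The key combinatorial step is the splitting lemma. For any arrangement, let $i$ be the largest index with $|B_1\cup\dots\cup B_i|\le n/2$. Then $|B_1\cup\dots\cup B_{i+1}|>n/2$, so the suffix $B_{i+2}\cup\dots\cup B_k$ has fewer than $n/2$ vertices. Hence every arrangement decomposes as a stored prefix state $(A,B)$ (with $B=B_i$), a single middle bucket $M=B_{i+1}$, and a stored suffix state $(A',B')$ (with $B'=B_{i+2}$). The degenerate cases where the prefix, the middle bucket, or the suffix is empty, or where $k$ is small, are handled by allowing empty parts and are treated separately.

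The optimum is then
\[
\min \; f(A,B) + g(A',B') + |E(M, A\cup A')| + |E(A\cup B, A'\cup B')|,
\]
taken over all ordered partitions of $V$ into five parts $A,B,M,B',A'$ with $|M|\le c$ and the size constraints above. Here $|E(A\cup B, A'\cup B')|$ counts every prefix–suffix edge, which must be long because it skips at least $M$. Each term is computable in polynomial time, so the combination step takes $O^*(5^n)\subseteq O(5.6569^n)$ time and needs only the two tables, i.e. $O(2.8285^n)$ space. Recovering an optimal arrangement, not just its value, uses standard back-pointers.

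I expect the main obstacle to be making the accounting in the combination step exactly correct. Every long edge must be counted exactly once. Edges among the prefix are counted by $f$, edges among the suffix by $g$, and edges between $M$ and $B\cup B'$ are short. The boundary cases ($k\le 2$, or empty $A$, $A'$, or $M$) also have to be set up so that the minimum ranges over precisely the valid arrangements with at most $c$ vertices per bucket. Obtaining the exact constants $2^{2.5}$ and $2^{1.5}$ is a matter of bounding these sums. The $5^n$ enumeration already fits under $2^{2.5n}$, so I would not try to tighten it.
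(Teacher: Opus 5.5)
Your proposal is correct and follows essentially the same route as the paper. Both use a subset dynamic program over prefixes indexed by their last bucket, stored only for prefixes with at most $n/2$ vertices (giving the $2^{1.5n}$ space bound), and combined around a single middle bucket $M$ whose removal leaves two halves of at most $n/2$ vertices each. The differences are cosmetic: the paper reuses one table $D$ for the suffix side by reversal symmetry and, for each pair $(M,L)$, minimizes over the two boundary buckets separately, whereas you keep a separate backward table and enumerate all five-part partitions $(A,B,M,B',A')$ jointly, which bounds that step by $O^*(5^n)$ rather than the paper's $O^*((4\sqrt{2})^n)$; both fit within $O(5.6569^n)$.
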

\begin{proof}
    We use dynamic programming over subsets of vertices of~$G$.
    Let~$L$ be a subset of~$V(G)$ of size at most $n/2$,
    and let $G[L]$ be the subgraph of~$G$ induced by~$L$.
    We define a table entry $D[L, B]$ to be the minimum number of
    long edges for a bucket arrangement of $G[L]$,
    in which $B \subseteq L$ is the rightmost bucket.

    We compute the table $D$ by using the following base cases.
    We set $D[B, B] = 0$ for all $B \subseteq V$ with $|B| \leq c$.
    This corresponds to the case where all vertices are in a single
    bucket.
    Additionally, we set $D[L, B] = \infty$ for all $|B| > c$ since
    this constitutes an invalid bucket arrangement for any~$L$.

    For the recursive case, let $E[U, W]$ denote the set of edges
    in $G$ with one endpoint in $U \subseteq V(G)$ and one endpoint
    in $W\subseteq V(G)$.
    Then, for some $B' \subseteq L \setminus B$ that represents the unique
    adjacent bucket of~$B$, the set $E[B, L\setminus (B'\cup B)]$
    describes all long edges that go over~$B'$ in $G[L]$ in any bucket
    arrangement, where~$B'$ and~$B$ are the last two buckets.
    This allows us to compute $D[L, B]$ as follows.
    \begin{align*}
        D[L, B] = \min_{B' \subseteq L\setminus B}\{D[L\setminus B, B'] + 
        |E[B, L\setminus (B'\cup B)]|\}.
    \end{align*}

    Note that in any (optimal) bucket arrangement, there is a
    bucket $M$ whose removal results in two halves that each have at most
    $n/2$ vertices.
    Based on this observation and~$D$ we compute an optimal solution as follows.
    
    We iterate over all possible subsets $M \subseteq V(G)$
    of size at most~$c$.
    For a given set~$M$, we test all possible subsets~$L$ of vertices
    that lie to the left of~$M$; a choice of~$L$ determines the subset
    of vertices~$R$ that lies to the right of~$M$. Hence, we can calculate
    the minimum number of long edges for a fixed~$L$,~$M$, and~$R$ by 
    \begin{align*}
        |E[L, R]| + 
        \min_{B \subseteq L}\{D[L, B] + 
        |E[L\setminus B, M]|\} + 
        \min_{B \subseteq R}\{D[R, B] + 
        |E[R\setminus B, M]|\}.
    \end{align*}
    This corresponds to iterating over two rows in $D$ (one for~$L$ and
    one for~$R$).
    Therefore, we get a total runtime of $O^*(4^n)$ for the computation
    of~$D$, since there are $O(2^n \cdot 2^{n/2})$ entries, and we need 
    $O^*(2^{n/2})$ time to compute one entry.
    We consider $O(4^n)$ combinations of~$M$ and~$L$.
    For a combination $(M, L)$ we compute the minimum number of long edges
    of this combination in $O^*(2^{n/2})$ time.
    Hence, we get a total runtime of  
    $O^*(4^n) + O^*((4\sqrt{2})^n)$ with a space consumption of
    $O((2\sqrt{2})^n\log n)$.
\end{proof}

Due to the connection between \BI and \BW, we can consult
known results of \BW to derive several observations
about the (parameterized) complexity of \BI.
\begin{observation}\label{obs:bi}
    \BI is
    \begin{enumerate}[(i)]
        \item\label{np-hard-pw-tw}
         \NP-hard even on graphs with bounded pathwidth
        (and therefore bounded treewidth),
        \item\label{np-hard-l}
         \NP-hard even if the number of long edges is bounded,
        \item\label{np-hard-max-degree}
         \NP-hard even if the maximum degree of the input graph
         is bounded,
        \item\label{np-hard-c}
         \NP-hard with respect to the capacity $c$,
        \item\label{inapprox}
         inapproximable within any constant factor unless
        $\P = \NP$.
    \end{enumerate}
\end{observation}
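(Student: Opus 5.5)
All five items will be obtained by reduction from known hardness results for \BW. The basic tool is the observation that \BW is the special case of \BI with zero long edges. The decision version of \BW asks, for a graph $G$ and an integer $c$, whether the bucketwidth of $G$ is at most $c$. This is exactly the question whether the instance $(G,c)$ of \BI has an optimal value of $0$.

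I will first record this identity. It is an identity of instances: the graph is unchanged, so every structural restriction on $G$ carries over verbatim. These restrictions are bounded pathwidth, bounded treewidth and bounded maximum degree.

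I will then invoke the corresponding results from the bucketwidth/tree-partition-width literature \cite{bgj-pcctp-ipec22,bgj-otpcoctp-dmtcs25}. Bucketwidth is known to be \NP-hard on graphs of bounded pathwidth, hence of bounded treewidth, since treewidth is at most pathwidth. This yields item~\ref{np-hard-pw-tw}. Deciding bucketwidth is also hard on graphs of bounded maximum degree, which yields item~\ref{np-hard-max-degree}.

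Item~\ref{np-hard-l} follows because the reduction only uses instances with $\ell=0$. Hence \BI is already \NP-hard when the number of long edges is bounded by the constant $0$.

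Item~\ref{np-hard-c} I would interpret as \NP-hardness for a constant capacity $c$ (para-\NP-hardness with respect to $c$). For this I would use \BW's hardness for a fixed width bound, as given in \cite{bgj-pcctp-ipec22,bgj-otpcoctp-dmtcs25}. If the literature result is stated for \textsc{Bandwidth} rather than bucketwidth, I would translate it using the factor-$2$ relation between the two parameters cited in the introduction.

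For item~\ref{inapprox}, suppose there were an $\alpha$-approximation for \BI with $\alpha$ constant. On an instance whose optimum is $0$, it would have to output a solution with at most $\alpha\cdot 0=0$ long edges. So it would decide \BW exactly in polynomial time, and this would imply $\P=\NP$.

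The main obstacle is bookkeeping rather than mathematics. I must make sure that the cited \BW hardness results are indeed stated for bucketwidth, or transfer cleanly to it, under exactly the restrictions needed. These restrictions are bounded pathwidth, bounded degree and fixed width, and the concern matters most for items~\ref{np-hard-pw-tw}, \ref{np-hard-max-degree} and~\ref{np-hard-c}. If a source gives only hardness for tree-partition width or bandwidth, I would first check whether its constructed instances have a path-like (caterpillar) quotient structure, so that bucketwidth coincides with the parameter treated there. Only if that fails would I add a small gadget argument.
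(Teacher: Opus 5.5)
Your items (ii) and (v) are fine. Item (ii) is exactly the paper's argument. Your gap argument for (v) is correct and more direct than the paper's appeal to bandwidth inapproximability: an $\alpha$-approximation must output $0$ long edges whenever the optimum is $0$, so it decides \BW.

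Item (i) is true, but not because of the tree-partition-width papers you cite. The source is Dubey, Feige and Unger~\cite{dfu-hrab-jcss11}: \textsc{Bandwidth} is \NP-hard to approximate within any constant factor already on caterpillars of hair length~3, which have bounded pathwidth. This transfers to \BW through the factor-2 relation only because it is a gap statement. Your general fallback of ``translating via the factor-2 relation'' does not transfer plain \NP-hardness of an exact threshold.

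The real gap is item (iv), and with it item (iii). Instances with optimum~$0$ cannot give \NP-hardness for a constant capacity unless $\P = \NP$, because for fixed $c$ you can decide whether the bucketwidth is at most $c$ in $n^{O(c)}$ time. To see this, take connected $G$ and place $B_1,\dots,B_i$. The unplaced vertices then form a union of components of $G-B_i$. Each such component contains a neighbor of $B_i$ that must go into $B_{i+1}$, so there are at most $c$ of them, and a left-to-right dynamic program has only $n^{O(c)}$ states. Hence the ``hardness of \BW for a fixed width bound'' you plan to cite cannot exist. For \textsc{Bandwidth}, deciding width at most a fixed $k$ is likewise polynomial, so there is nothing to translate either.

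The missing idea is to use instances that need long edges. With $c=1$, a bucket arrangement is a linear order whose short edges join consecutive vertices. So there are at most $n-1$ short edges, with equality iff $G$ has a Hamiltonian path; this is the reduction in \cref{lem:eth-lower-bound}. \textsc{Hamiltonian Path} remains \NP-hard on graphs of bounded maximum degree~\cite{pv-otptsp-ja84}, so this one reduction gives both (iv) and (iii). In your proposal, (iii) rests only on an unsupported claim that \BW is \NP-hard on bounded-degree graphs.
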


Observations~(\ref{np-hard-pw-tw}) and~(\ref{inapprox}) follow
from the fact that \textsc{Bandwidth} cannot be approximated
within any constant even for caterpillars of hairlength~3 unless $\P = \NP$
and an observation that \textsc{Bucketwidth} and \textsc{Bandwidth}
are within a factor of 2 from each other~\cite{dfu-hrab-jcss11}.
Observation~(\ref{np-hard-l}) is a direct consequence of the fact that
\textsc{Bucketwidth} is \NP-hard, and observation~(\ref{np-hard-max-degree})
and~(\ref{np-hard-c}) follow from the proof in \cref{lem:eth-lower-bound}
since the proof uses $c=1$ and \textsc{Hamiltonian Path} is \NP-hard even
on graphs with bounded degree~\cite{pv-otptsp-ja84}.

Even though \BI is \NP-hard for graphs with bounded pathwidth, we can
still derive an upper bound on the pathwidth of the input graph in terms
of the capacity and the number of long edges.
Therefore, a solution of \BI with few long edges and small capacity implies that 
the input graph has small pathwidth. 

\begin{lemma}
    \label{lem:pw}
    Let $(G, c)$ be an instance of \BI and let $\ell$ be the minimum number
    of long edges required in any bucket arrangement of~$G$.
    Then, $\operatorname{pw}(G) \leq 2c + \ell - 1$.
\end{lemma}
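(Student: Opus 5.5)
Take an optimal \CChain $B_1,\dots,B_k$ of $G$ with width at most $c$ and exactly $\ell$ long edges. We build a path decomposition directly from it, using the usual argument that a bucket arrangement of width $c$ with only short edges gives a path decomposition of width at most $2c-1$. We then repair the long edges one at a time by adding one endpoint of each to several bags.

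\textbf{Base decomposition.} Define bags $X_i = B_i \cup B_{i+1}$ for $1\le i\le k-1$ (or $X_1=B_1$ if $k=1$), arranged along a path in this order. Each bag has size at most $2c$. Every vertex $v\in B_i$ occurs in $X_{i-1}$ and $X_i$ only, so the bags containing $v$ are consecutive. Every short edge $\{u,v\}$ with $u\in B_i$ and $v\in B_j$, $|i-j|\le 1$, is covered by some bag $X_{\min(i,j)}$ (or by $X_{i-1}$ or $X_i$ when $i=j$). Hence this is a path decomposition of $G-\mathcal{L}$ of width at most $2c-1$.

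\textbf{Handling long edges.} For each long edge $e=\{u,v\}\in\mathcal{L}$ with $u\in B_i$, $v\in B_j$ and $i<j-1$, add $u$ to every bag $X_i,X_{i+1},\dots,X_{j-1}$. The bags containing $u$ then still form a contiguous interval, since $u$ was already in $X_{i-1}$ and $X_i$ and we only extend to the right. Moreover, $X_{j-1}\supseteq B_j\ni v$, so $e$ is covered. Each long edge increases each bag by at most one vertex, so every bag has size at most $2c+\ell$. This gives width at most $2c+\ell-1$.

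\textbf{Main obstacle.} The only point needing care is that extending different long edges does not break connectivity. If the same vertex $u$ is the left endpoint of several long edges, we extend its interval to the maximum required index. This is still an interval and only reduces the count, so the bound remains at most $\ell$ extra vertices per bag. One also has to check the degenerate cases $k\le 2$, where no long edges exist and a single bag of size at most $2c$ suffices.
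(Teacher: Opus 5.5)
Your proof is correct and follows essentially the same route as the paper: build a path decomposition of width at most $2c-1$ from the bucket arrangement, then add one endpoint of each long edge to the bags it skips, so each long edge adds at most one vertex to each bag. The paper interleaves the original buckets with intermediate bags holding the endpoints of edges between adjacent buckets, whereas you use $X_i=B_i\cup B_{i+1}$; this has the small advantage that the long edge is covered immediately by $X_{j-1}\supseteq B_j$, a point the paper's sketch (which adds the endpoint only to buckets strictly between $i$ and $j$) leaves implicit.
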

\begin{proof}
    Take a solution with $\ell$ long edges.
    For two adjacent buckets, create an intermediate bucket that contains the vertices
    of every edge between vertices of the adjacent buckets. This intermediate
    bucket contains at most $2c$ vertices.
    For every long edge $\{u, v\} \in E(G)$, where $u$ is in bucket $i$ and $v$ is
    in bucket $j$ $(i < j)$, choose either $u$ or $v$ and add it into every 
    bucket~$k$ for $i < k < j$.
    The buckets form a valid path decomposition and every bucket has size at most
    $2c + \ell$.
\end{proof}

It would be desirable to derive an upper bound on a function of~$c$
and~$\ell$ that only depends on structural graph parameters, but not
on the input size.
Unfortunately, for any combination of parameters that are invariant under
(graph) duplication such as $\operatorname{pw}(G)$ or the maximum degree of~$G$,
this is impossible if $\ell > 0$.
To see this, suppose a graph~$G$ and a given capacity~$c$ requires~$\ell$
long edges. Then, we can simply take~$k$ copies of~$G$. 
As a consequence, the number of long edges increases to $k\cdot\ell$ but
the graph parameters do not change.

On the positive side, \BI becomes fixed-parameter tractable when
parameterized by the vertex cover number of the input graph.
\begin{theorem}
    \label{thm:fpt-vc}
    Let $(G, c)$ be an instance of \BI and let~$C$ be a minimum
    vertex cover of~$G$.
    An optimal solution of $(G, c)$ can be computed in
    $2^{O(|C|\log|C|)}\cdot n^{O(1)}$ time.
\end{theorem}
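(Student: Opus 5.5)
We exploit that $I := V(G)\setminus C$ is an independent set, so every edge has at least one endpoint in~$C$. Let $k=|C|$; a minimum vertex cover can be computed in $2^{O(k)}\cdot n^{O(1)}$ time. The plan is to guess the ``skeleton'' of an optimal \CChain restricted to~$C$ in $2^{O(k\log k)}$ ways. For each guess we then place the vertices of~$I$ optimally by a min-cost flow computation.

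\emph{The guess.} Fix an optimal \CChain. At most $k$ buckets contain cover vertices; call them \emph{cover buckets}. We guess an ordered partition $(C_1,\dots,C_r)$ of~$C$, $r\le k$, giving the contents of the cover buckets from left to right; there are at most $k!\cdot 2^{k}=2^{O(k\log k)}$ of these. For each of the $r-1$ consecutive pairs $C_j,C_{j+1}$ we also guess the number $g_j$ of non-cover buckets between them, truncated to $\{0,1,2,\ge 3\}$, which gives another $4^{k}$ choices. Both the number of long edges among cover vertices and the way $I$-vertices interact with the skeleton depend only on distances of at most~$2$. Consequently we may replace any gap of size at least~$3$ by a gap of exactly~$3$: the middle bucket of such a gap is at distance at least~$2$ from all cover buckets. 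We additionally allow one bucket to the left of~$C_1$ and one to the right of~$C_r$. The result is a skeleton of $O(k)$ \emph{candidate buckets}, together with an auxiliary sink $F$ of unbounded capacity. The sink $F$ stands for ``far'' buckets, i.e., buckets at distance at least~$2$ from every cover bucket, which we may create as many of as needed (e.g., at the ends of the chain).

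\emph{Placing $I$.} The guess fixes the long edges with both endpoints in~$C$. For $v\in I$ and a candidate bucket~$B$, the cost of putting $v$ into $B$ is the number of neighbours of $v$ whose bucket has index distance at least~$2$ from~$B$. Putting $v$ into $F$ costs $\deg(v)$. Since $I$ is independent, these costs are independent of where the other $I$-vertices go, so the total cost is additive. The only interaction between $I$-vertices is through capacities: bucket $B$ can host $c-|B\cap C|$ further vertices, and this residual capacity must be nonnegative for the guess to be feasible. Hence the placement problem is a transportation problem: a source sends one unit to each $v\in I$, arcs $v\to B$ carry the cost just defined, and each $B$ has an arc to the sink with capacity equal to its residual capacity, while $F$ has unbounded capacity. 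By integrality of min-cost flow this is solved optimally in polynomial time. Two details remain. First, empty candidate buckets are harmless, because we can delete them at the end if they are not needed to separate anything. Second, when deleting a bucket shrinks a gap, the number of long edges can only decrease, so evaluating the guessed gap sizes gives a valid upper bound. We return the best solution over all guesses.

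\emph{Correctness.} Taking the guess induced by an optimal arrangement (after truncating gaps to~$3$, which changes no cost), every $I$-vertex of that arrangement sits either in a candidate bucket or in a far bucket. Thus the flow instance contains a solution of equal cost, and conversely every flow solution yields a valid arrangement of the computed cost. The total running time is $2^{O(k\log k)}\cdot n^{O(1)}$. I expect the main obstacle to be this normalization step: arguing carefully that truncating gaps and collapsing all far placements into~$F$ neither loses optimality nor violates capacities. This covers in particular $I$-vertices between two cover buckets at distance exactly $2$ or $3$, which must be counted in the correct candidate bucket.
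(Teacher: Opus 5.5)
Your proposal is correct and follows essentially the same route as the paper. The paper also guesses a placement of the cover vertices into $O(|C|)$ buckets: it keeps only buckets that contain or are adjacent to cover vertices, giving $3|C|$ buckets and $(3|C|)^{|C|}$ guesses. It then places the independent set $I$ by a minimum-weight bipartite matching over free slots, with a per-vertex dummy slot of weight $|N(v)|$, which is exactly your transportation problem with the far sink $F$. Your ordered-partition-plus-truncated-gaps guess and min-cost flow only repackage this, and you are more careful than the paper about empty buckets and about converting a flow solution back into an arrangement.
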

\begin{proof}
    Let $I = V(G) \setminus C$ be the remaining vertices in $G$.
    Note that~$I$ is an independent set, since~$C$ is a vertex cover.
    The general idea of the algorithm is the following.
    Firstly, we guess a bucket arrangement of the vertices
    in~$C$ in \FPT-time.
    Then, for a fixed bucket arrangement of~$C$, we optimally place
    the vertices of~$I$ into buckets in polynomial time by reducing
    this subproblem to a matching problem.
    Out of all bucket arrangements, we choose the one with the fewest long
    edges.
    
    To guess the optimal bucket arrangement of the vertex cover in
    \FPT-time we show that a bucket arrangement of~$C$ with $k = 3|C|$
    buckets suffices to consider.
    Indeed, if we place three (or more) buckets between buckets that
    contain vertices in~$C$, then every vertex~$v$ in~$I$ that is not
    placed immediately adjacent to a bucket that contains a vertex in~$C$
    will only be incident to long edges due to the independence of~$I$.
    Thus, we can alternatively add such vertices in a post-processing step using
    additional buckets that are placed at the end of the arrangement.
    As a consequence, we only need to consider buckets that contain vertices
    in~$C$ or are immediately adjacent to a bucket with vertices in~$C$, implying
    that $3|C|$ buckets suffice.

    Now, assume that a bucket arrangement of~$C$ is fixed, i.e., we
    have $B_1, \dots, B_{3|C|}$ buckets where~$C$ is divided into buckets
    $B_1, \dots, B_{3|C|}$ conforming to the capacity constraints with
    $c - |B_i|$ free \emph{slots} each in which we can place vertices in~$I$.
    We construct an auxiliary bipartite weighted graph~$H$
    whose vertices comprise the vertices in~$I$, every free slot
    of the bucket arrangement, and a dummy slot-vertex $s_{v}^\mathrm{r}$
    for every vertex in~$I$.
    Edges in~$H$ are between slots and vertices in~$I$.
    For each edge $\{v, s\}$, where $v \in I$ and~$s$ is a free slot,
    we define the weight $w(\{v, s\})$ to be the number of long edges
    incident to~$v$ if we place~$v$ in slot~$s$.
    Since~$N(v)$ is a subset of~$C$, this number is independent of the
    placement of other vertices in~$I$.
    If $s = s_{v}^\mathrm{r}$ we set $w(\{v, s_v^\mathrm{r}\}) = |N(v)|$.
    An optimal placement of the vertices in~$I$ given the fixed bucket
    arrangement is now equivalent to finding a minimum weight matching
    in~$H$ which can be done in polynomial time.

    In total, we guess $3|C|^{|C|} \in 2^{{O}(|C|\log|C|)}$
    many bucket arrangements. For each bucket arrangement we need polynomial
    time to determine the optimal number of long edges for this
    bucket arrangement.
    Therefore, \BI is \FPT{} with respect to the vertex
    cover number.
\end{proof}

\subparagraph*{Ordered Bucket Integrity.}  
We consider a variant of \BI that is solvable in polynomial time.
When designing heuristics for \BI, this variant proved to be
effective (see \cref{ssec:arrangement,sec:experiments}).

\begin{problem}[\OBI]\label{def:obi}
    Let $\sigma = \langle v_1,\dots,v_n\rangle$ be an ordered set of
    vertices in a graph~$G$ and let $c\in\mathbb{N}$. 
    A \CChain $B_1, \dots, B_k$ of~$G$ is called \emph{ordered}
    (with respect to~$\sigma$) if $i < j$ implies $p \leq q$ 
    for every $v_i \in B_p, v_j \in B_q$.
\end{problem}

Even though it would be tempting to simply divide the ordering into
chunks of size~$c$, this approach is not optimal (in fact, there are
instances in which many buckets are not completely filled).
However, we can still compute an optimal solution in polynomial time.

\begin{theorem}
    \label{thm:obi}
    \OBI can be solved in $O(nc^2)$ time and space.
\end{theorem}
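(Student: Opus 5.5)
An ordered bucket arrangement is determined by cutting $\sigma$ into consecutive intervals of length at most $c$. The number of long edges depends only on pairs of consecutive buckets, so I would set up a dynamic program over prefixes of $\sigma$ whose state records the last two buckets. Write $\sigma[a..b]=\{v_a,\dots,v_b\}$.

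\textbf{Table.} Let $D[i,p,q]$ be the minimum number of long edges among edges with both endpoints in $\sigma[1..i]$. The minimum is taken over ordered arrangements of $\sigma[1..i]$ whose last bucket is $\sigma[i-p+1..i]$ and whose second-to-last bucket is $\sigma[i-p-q+1..i-p]$, with $1\le p,q\le c$. We also allow $q=0$ when the last bucket is the only one. There are $O(nc^2)$ states.

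\textbf{Transition.} Appending the next bucket $\sigma[i+1..i+r]$ to an arrangement in state $(i,p,q)$ gives state $(i+r,r,p)$. It adds exactly the edges between the new bucket and $\sigma[1..i-p]$, since everything except the last previous bucket now lies at distance at least two. So
\[
D[i+r,r,p]=\min_{q}\bigl(D[i,p,q]\bigr)+\bigl|E[\sigma[i+1..i+r],\,\sigma[1..i-p]]\bigr|.
\]

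\textbf{Removing one index.} The new entry's cost depends only on $(i,p)$ and $r$, not on $q$. So the third index is redundant for optimization, and it suffices to keep $F[i,p]=\min_q D[i,p,q]$ with $O(nc)$ entries. Each entry is computed from $O(c)$ predecessors by
\[
F[i,p]=\min_{1\le p'\le c}F[i-p,p']+w(i,p,p'),
\]
where
\[
w(i,p,p')=\bigl|E[\sigma[i-p+1..i],\,\sigma[1..i-p-p']]\bigr|.
\]
The base case is $F[p,p]=0$. The answer is $\min_p F[n,p]$, and the arrangement is recovered by back-pointers.

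\textbf{Main obstacle: evaluating $w$ in $O(1)$ amortized time.} I expect this to be the step needing care, so that the claimed $O(nc^2)$ total is met. Define
\[
g(j,t)=\bigl|\{u\in N(v_j):\operatorname{pos}(u)\le t\}\bigr|,
\]
the number of neighbors of $v_j$ among the first $t$ vertices. Then $w(i,p,p')=\sum_{j=i-p+1}^{i}g(j,i-p-p')$. The cut position $t=i-p-p'$ lies in a window of size $O(c)$ below $j$.

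For each $j$ I would precompute $g(j,t)$ for the $O(c)$ relevant values $t\in[j-2c,\,j-1]$. This starts from $g(j,j-1)$, the count of earlier neighbors, and subtracts neighbors one position at a time, costing $O(nc)$ after an $O(n+m)$ sort of adjacency lists by position. Then, for fixed $i$ and $t$, prefix sums over $j$ give each needed sum in $O(1)$. The $(i,t)$ prefix sums number $O(nc\cdot c)=O(nc^2)$. This yields $O(nc^2)$ time and space overall, in addition to reading the graph. If the bound must hold literally, I would argue that only the edges spanning at most $2c$ positions are needed for the window counts. All other earlier neighbors of $v_j$ are long regardless of the cuts and are counted once as a constant.

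Correctness follows by induction on $i$. Every ordered arrangement of $\sigma[1..i]$ ending in a bucket of size $p$ decomposes as an ordered arrangement of $\sigma[1..i-p]$ plus that bucket. The long edges incident to the new bucket are exactly those counted by $w$.
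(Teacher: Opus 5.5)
Your proof is correct, but it sets up the dynamic program differently from the paper.

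\textbf{The paper's route.} It processes $\sigma$ one vertex at a time, using a three-dimensional table $D[i,d,f]$ that stores the sizes of both of the last two buckets. Appending $v_i$ to the current bucket costs the number of earlier neighbours $v_j$ with $j\le i-d-f$. The paper maintains this count in $O(1)$ amortized time by sweeping a pointer through the position-sorted neighbourhood of $v_i$ along the diagonals of constant $d+f$. The $O(nc)$ entries with $f=1$, which open a new bucket, take $O(c)$ each.

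\textbf{Your route.} You append a whole bucket per transition and observe that its cost does not depend on $q$. This collapses the state to $F[i,p]$, with only $O(nc)$ entries and $O(c)$ transitions each. The price is that each transition cost $w(i,p,p')$ becomes a range sum over the new bucket, which you pay for with the window counts $g(j,t)$ and prefix sums.

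\textbf{Comparison.} Both routes use the same fact, that edges spanning at least $2c$ positions are always long, to keep the bound independent of $m$. Both reach $O(nc^2)$ time. Yours is the standard segmentation DP, and its correctness is somewhat more transparent, since it needs no padding with empty phantom buckets as in the paper's base case $D[0,d,f]=0$.

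Your route also gives slightly more if you index the prefix sums by the cut alone. Set $S(t,x)=\sum_{j=t+1}^{x}g(j,t)$ for $t<x\le t+2c$. Then $w(i,p,p')=S(t,i)-S(t,i-p)$ with $t=i-p-p'$, and both arguments lie in the stored range because $i-p=t+p'$ and $i\le t+2c$. The tables $g$, $S$ and $F$, together with the back-pointers, then have $O(nc)$ entries. So you get $O(nc^2)$ time and $O(nc)$ space, better than the $O(nc^2)$ space of the paper's three-dimensional table.
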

\begin{proof}
    We use dynamic programming.
    Let $D[i, d, f]$ be the minimum number of long edges of a bucket
    arrangement of the induced subgraph of the first $i$ vertices 
    in~$\sigma$ where the last two buckets $B_{k-1}$, $B_k$ have
    a size of $|B_{k-1}| = d$ and $|B_{k}| = f$.
    By definition, an optimal solution can be found by computing
    $\min_{1\leq d,f \leq c} D[n, d, f]$.
    
    We use the base case $D[0, d, f] = 0$ for all $1 \leq d, f\leq c$.
    For $i \geq 1$, we do the following.
    Since we compute an ordered bucket arrangement, $v_i$ has to
    be in the last bucket of any bucket arrangement of the
    first~$i$ vertices.
    Thus, any edge~$\{v_i, v_j\}$ in $G[\{v_1, \dots, v_i\}]$
    is a long edge in any ordered bucket arrangement whose last
    two buckets have sizes~$d$ and~$f$, respectively, if~$j$ is
    at most~$i-d-f$.
    Therefore, we can compute the recursive case for 
    $i > 0$ and $1 \leq d, f\leq c$ as follows.
    \begin{align*}
        D[i, d, f] = \begin{cases} 
            D[i-1, d, f-1] + 
            |\{\{v_i, v_j\} \in E(G) \mid j \leq i-d-f\}| & \text{if } f > 1, \\
            \displaystyle\min_{1 \leq d' \leq c}\{D[i-1, d', d] + |\{\{v_i, v_j\} \in E(G) \mid j < i-d\}|\} &\text{else.}
        \end{cases}
    \end{align*}
    Since every edge $\{v_i, v_j\}$ is always a long edge if
    $|j-i| > 2c$, we can remove these edges in a preprocessing
    step. As a consequence, $|N(v)| \in O(c)$.
    Now, to process a vertex~$v_i$ for cases with $f > 1$, 
    we order (increasingly) the neighbors 
    $N(v_i) \cap \{v_1, \dots, v_i\}$ of~$v_i$ 
    by their indices in~$\sigma$ which can easily be done in
    $O(c\log c)$ time. Let $N_\sigma(v_i)$ be this order.
    We fill the two-dimensional table $D[i, d, f]$ of~$v_i$, by
    iterating over pairs $d, f$ whose sum $d+f$ correspond to    
    increasing values in the range $\{1, \dots, 2c\}$. 
    In other words, we fill the table along the diagonals of
    $D[i, d, f]$ from top-left to bottom-right.
    Since we fill the table for increasing values of $d+f$, we
    can maintain the current number of short edges incident to~$v_i$
    by moving an index~$x$ through the
    sorted neighborhood~$N_\sigma(v_i)$ from right to left while
    iterating over increasing the values of~$d+f$.
    Initially, $x=|N_\sigma(v)|$ and points to the rightmost
    neighbor left to~$v_i$ in~$\sigma$.
    For each pair~$d, f$ we test if the index of the vertex $v_j$,
    that~$x$ currently points to, is at least $i-(d+f)$.
    If this is the case, $v_j$ is in one of the last two buckets
    which makes $\{v_i, v_j\}$ a short edge, and we increment~$x$.
    In this way, $|N_\sigma(v_i)| - x$ corresponds to the number
    of long edges incident to~$v_i$ for the current choice of~$d$
    and~$f$.
    There are at most $O(nc)$ many entries where $f=1$. For each
    of these entries we can easily compute its value in $O(c)$
    time.
    Consequently, we can process a vertex~$v_i$ in $O(c^2)$ time,
    yielding a total runtime of $O(nc^2)$.
    We can retrieve an optimal bucket arrangement from the
    table~$D$ using standard techniques.
\end{proof}

\section{Pipeline}\label{sec:pipeline}

We use a three-phase pipeline to compute Chunky Chain visualizations.
The input is a graph~$G$ and a capacity~$c\in\mathbb{N}$.  Firstly, we
compute a bucket arrangement for~$G$ that has width at most~$c$ and few
long edges (see \cref{ssec:arrangement}).  Secondly, we order the vertices
in each bucket such that few edges cross (see \cref{ssec:crossings}).
Thirdly, we calculate the geometry for the actual drawing (see
\cref{ssec:geometry}).

\subsection{Computing Bucket Arrangements}\label{ssec:arrangement}

The basic structure of a Chunky Chain is determined by a bucket arrangement.
The input parameter~$c$ limits the width of our drawing~\cref{tkey:R1} while,
for drawing quality, we seek to produce as few long edges as possible~\cref{tkey:M1}.
Note that graph parameters can already hint at the number of long
edges. If $c \leq \lceil(\Delta+1)/3\rceil$, where~$\Delta$ is the maximum degree
of any vertex in~$V(G)$, any bucket arrangement of width~$c$ has long edges. 
When~$G$ has bandwidth~$b_G$ and $c \geq b_G$, then a bucket arrangement
with width~$c$ exists that has only short edges.

To compute a bucket arrangement, we present an integer linear programming (ILP)
formulation for \BI and a heuristic that is faster on large graphs.
We use~$[n]$ as a shorthand for $\{1, \dots, n\}$.

\subparagraph*{Integer Linear Programming Formulation for \BI.}
Based on the definition of \cref{def:bi}, we can formulate an intuitive
ILP that minimizes the number of long edges, using binary variables
representing a vertex to bucket assignment, and binary variables for
each edge representing whether the edge is long or not. 
The upper bound $\Lambda$ on the number of buckets
that~\cref{lem:num-buckets-upper-bound} provides can be used to
limit the number of variables used.
The weakness of this formulation is that we need $O(\Lambda^2)$
constraints for every edge in order to model a long edge.
Alternatively, we can maximize the number of short edges which
yields a more compact formulation. More precisely, we use the following
variables.
\begin{itemize}
    \item $x_{i,v}$ for $v\in V(G)$, $i\in[\Lambda]$, equals~$1$ 
      if~$v$ is in bucket~$i$, $0$ otherwise.
    \item $y_{i,u,v}$ for $\{u,v\}\in E(G)$ and $i\in[\Lambda]$, equals~$1$ if $\{u,v\}$
        is a short edge and~$u$ is in bucket~$i$, $0$ otherwise.
\end{itemize}
\begin{align*}
    \text{Maximize }\quad\sum_{\substack{\{u,v\}\in E(G)\\i\in[\Lambda]}} y_{i,u,v} \\
    \text{subject to }\hspace{9.6ex} \sum_{i=1}^\Lambda x_{i,v} & =1 && \text{for all } v\in V(G)
        \tag{B1}\label{cons2:b1} \\
    \sum_{v\in V(G)} x_{i,v} &\leq c && \text{for all } i\in[\Lambda] \tag{B2}\label{cons2:b2} \\
    x_{i,u} &\geq y_{i,u,v} && \text{for all } \{u,v\}\in E(G), i\in[\Lambda] \tag{YU}\label{cons2:yu} \\
    x_{i-1,v} + x_{i,v} + x_{i+1,v} &\geq y_{i,u,v} && \text{for all } \{u,v\}\in E(G), i\in\{2,\dots,\Lambda-1\}
        \tag{YV}\label{cons2:yv} \\
    x_{1,v} + x_{2,v} &\geq y_{1,u,v} && \text{for all } \{u,v\}\in E(G) \tag{Y1}\label{cons2:y1} \\
    x_{\Lambda-1,v} + x_{\Lambda,v} &\geq y_{\Lambda,u,v} && \text{for all } \{u,v\}\in E(G)
        \tag{Y$\Lambda$}\label{cons2:yl} \\
    x_{i,w},y_{i,u,v} &\in\{0,1\} && \text{for all } i\in[\Lambda], w\in V(G), \{u,v\}\in E(G)
\end{align*}
Constraints \eqref{cons2:b1} and \eqref{cons2:b2} ensure that each vertex is placed
in exactly one bucket and that each bucket contains at most~$c$ vertices.
\eqref{cons2:yu} and \eqref{cons2:yv} ensure that $y_{i,u,v}$ is~$1$ if and only if
$\{u,v\}$ is short and~$u$ is in bucket~$i$.  \eqref{cons2:y1} and \eqref{cons2:yl} are
special cases of \eqref{cons2:yv} for the first and the last bucket.  The objective
counts the number of short edges (that we want to maximize).

\subparagraph*{A Meta-Heuristic Approach.}  While state-of-the-art ILP solvers
produce good results for graphs with up to 100 vertices (see \cref{sec:experiments}),
heuristics can help in cases where we need quick response times or when
expensive solvers are unavailable.  From \cref{thm:obi} we know that \OBI can be
solved efficiently.  We use it to find good solutions for \BI.  There are many
ways to order a graph.  In our experiments (see \cref{sec:experiments}), we tested
bandwidth heuristics and a dimensionality reduction approach.

However, finding good orderings for instances where the minimum number of long
edges is large turned out to be difficult.
We propose the following strategy to find one. 
Intuitively, we iteratively simplify a given instance by removing edges that are
most likely long, thereby creating an instance in which the ordering heuristic
produces good solutions. Our algorithm works as follows.
(i) Given a black box ordering strategy, let $\sigma_G$ be the ordering determined for
graph~$G$.  Solve \OBI for~$G$ and~$\sigma_G$.  If $\ell=0$ stop and return~$\sigma_G$.
(ii) Find an edge that is probably long and remove it from~$G$.  To this end,
determine the set of edges~$\mathcal{C}$ that maximize $|\sigma_G(u) -
\sigma_G(v)|$.  Among the edges in $\mathcal{C}$, find the
edge~$\{u,v\}$ that maximizes $(\deg(u) - 1)\cdot(\deg(v) - 1)$ and remove it from~$G$.
(iii) Let~$G'$ be the graph without $\{u,v\}$.  If~$G'$ is connected, replace~$G$
with~$G'$ and go to (i).  Otherwise, recurse on the connected components of~$G'$
and concatenate the results.

\subparagraph*{Post-processing.}  With a bucket arrangement in place, we apply
a post-processing step that reduces the number of gate transitions~\cref{tkey:M2}
without increasing the width of the chosen arrangement.  We consider all pairs
of nodes~$u,v$ with~$u\in B_i$ and $v\in B_{i+1}$ such that $B_{i-1}\cap N(u) =
\varnothing$ and $B_{i+2}\cap N(v) = \varnothing$.  We calculate how many gate 
transitions we can avoid by
swapping~$u$ and~$v$ and greedily perform the most beneficial swap.  We repeat
this strategy until no more beneficial swaps are possible.

\subsection{Crossing Minimization}\label{ssec:crossings}

The bucket arrangement determines which vertex is part of which chord diagram.
The second phase of our pipeline is concerned with minimizing crossings \cref{tkey:M3}
in each chord diagram by ordering the vertices and placing the gate nodes (i.e., the
sections that connect adjacent chord diagrams).  Unfortunately, crossing minimization
in circular layouts is \NP-hard~\cite{mknf-npcnlp-iscas87}.

\subparagraph*{\FPT-Algorithm Parameterized by the Capacity.}
If the capacity is bounded by a constant, we can use dynamic programming to
compute an exact solution in polynomial time. 
Let~$G$ be a given graph with a fixed bucket arrangement $B_1, \dots, B_k$
with capacity~$c$.
Then, an optimal ordering of the vertices inside all buckets that minimizes the number of 
crossings can be computed in $2^{O(c^2\log c)}\cdot n$ time.

For every $1 \le i \le k-1$, let $E_{i,i+1}$ denote the set of edges between
buckets $B_i$ and $B_{i+1}$. 
A \emph{gate state} is a permutation $\tau_i$ of $E_{i,i+1}$
that describes the order in which these edges pass through the gate between
$B_i$ and $B_{i+1}$. 
The state $MC[i,\tau_i]$ stores the minimum number of crossings in the first~$i$ buckets,
assuming that the edges in $E_{i,i+1}$ respect the permutation~$\tau_i$.
For fixed incoming and outgoing gate states~$\tau_{i-1}$ and~$\tau_i$,
we can compute the minimum number of crossings~$T_i(\tau_{i-1}, \tau_i)$
in~$B_i$ by enumerating all possible orderings of $B_i \cup \{\tau_{i}, \tau_{i-1}\}$.
Then, the recurrence is given by
\begin{align*}
    MC[i, \tau_i] = \min_{\tau_{i-1}}\{MC[i-1, \tau_{i-1}] + T_i(\tau_{i-1}, \tau_i)\}.
\end{align*}
As base case, we set $MC[0, \varnothing] = 0$.
For each gate between two consecutive buckets, there are at most $(c^2)!$ possible
edge orders. Hence, for each bucket, the dynamic program considers at most
$((c^2)!)^2$ pairs of left and right gate orders.
For each such pair, we enumerate at most $(c+2)!$ local vertex arrangements and
count crossings naively among $O(c^2)$ local edges in $O(c^4)$ time.
Thus, the total running time is
$O\!\left(k \cdot ((c^2)!)^2 \cdot (c+2)! \cdot c^4\right)
\subseteq 2^{O(c^2 \log(c))} \cdot n$ with a space consumption of $O(k\cdot (c^2)!)$.

\subparagraph*{Heuristic Algorithm.}  The runtime of the dynamic program is
prohibitive even for small values of~$c$.  Therefore, we use a heuristic.
We handle the buckets of our arrangement from top to bottom using the vertex
order we determined for the previous bucket to order the next one.  Our
heuristic is based on the ConGreedy algorithm by Klawitter, Mchedlidze, and
Nöllenburg~\cite{k-acmbd-kit16,kmn-eebda-gd17}.  It works as follows.

Suppose we want to greedily order~$B_i$, given a fixed order~$\pi_{i-1}$
of~$B_{i-1}$. To this end, let~$\vtild$ be a vertex that represents the
gate connecting~$B_i$ with~$B_{i+1}$. The neighborhood of~$\vtild$ is the
(multi-)set of vertices in~$B_i$ that are adjacent to a vertex in~$B_{i+1}$
(i.e., if a vertex in~$B_i$ is adjacent to~$x$ vertices in~$B_{i+1}$ it
occurs~$x$ times in $N(\vtild)$).
We iteratively construct an order~$\pi_i$ of $B_i \cup \{\vtild\}$ as follows.
In each iteration, we pick a vertex~$v$ which has the most placed neighbors.
We break ties by picking the vertex with the lowest degree.
The vertex~$v$ is placed into a position in $\pi_i$ that incurs the minimum 
number of crossings in the current (concatenated) ordering~$\pi_{i-1}\pi_i$.
After every vertex in $B_i \cup \{\vtild\}$ is processed, we can split~$\pi_i$
at~$\vtild$ into a right half~\pr{i} and a left half~\pl{i} (without~$\vtild$).

We place each vertex exactly once.  For each vertex we look at each of its $O(c)$
placed neighbors and evaluate each of $O(c^2)$ closed edges for potential crossings.
Therefore, the total runtime is $O(nc^3)$.

\subsection{Geometry}\label{ssec:geometry}

In \cref{sec:cchain-method} we specified the broad shape of a Chunky Chain
without enforcing the exact geometry of every object. Here, we give one
implementation of Chunky Chains
such that the edge complexity remains constant~\cref{tkey:R3}, and we do not
introduce additional crossings~\cref{tkey:M3}.  For aesthetic reasons, we make
all buckets the same size and reserve the same portion of each circle for all
gate nodes (so they all have the same central angle~$\alpha$).  This also
simplifies the construction which works as follows (see \cref{fig:geo}).

\begin{figure}[tb]
    \begin{subfigure}[t]{.33\textwidth}
        \includegraphics[width=0.93\textwidth]{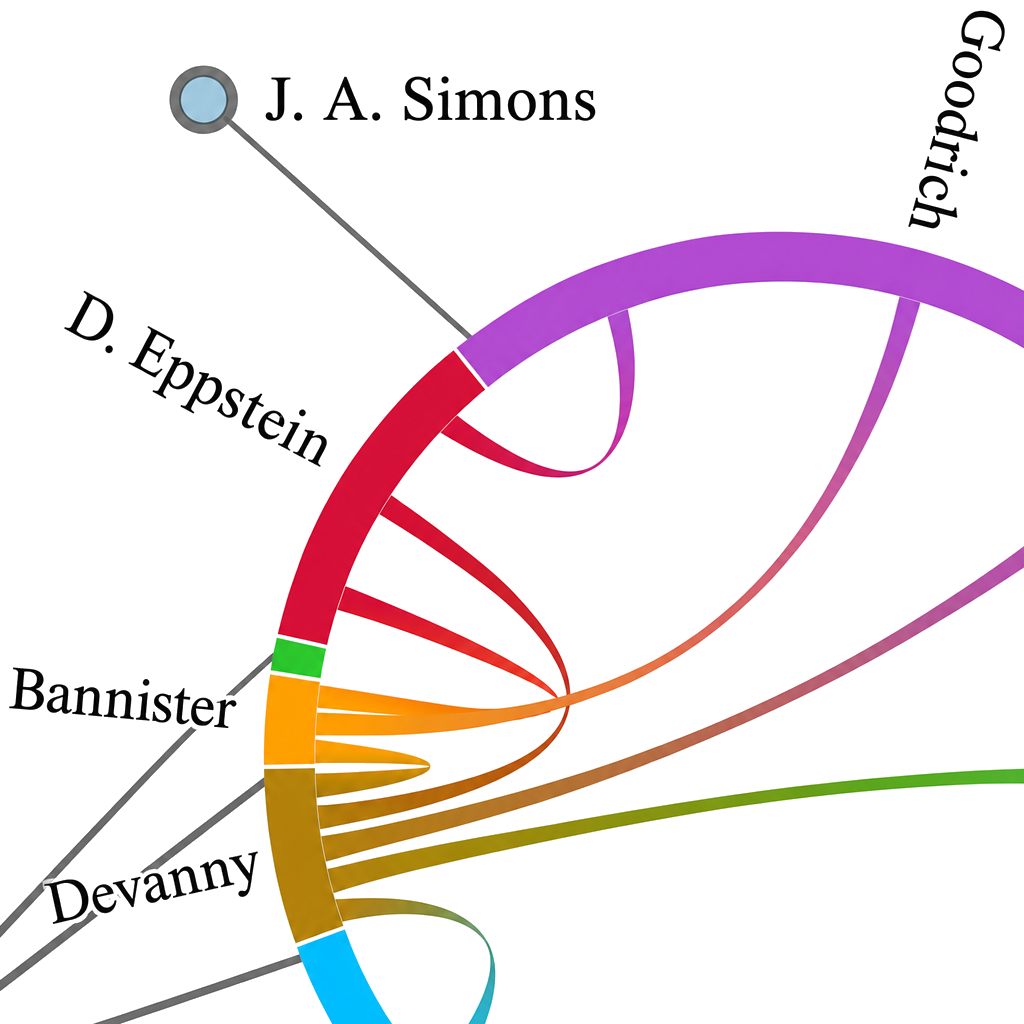}
        \subcaption{Chord diagram from {\cite[Fig.1]{admpt-clhvm-gd19}.}}
        \label{fig:chordlink}
    \end{subfigure}\hfill
    \begin{subfigure}[t]{.26\textwidth}
        \centering
        \includegraphics[page=3]{watzmann.pdf}
        \subcaption{\nolinenumbers Our chord geometry.}
        \label{fig:geo1}
    \end{subfigure}\hfill
    \begin{subfigure}[t]{.30\textwidth}
        \centering
        \includegraphics[page=4]{watzmann.pdf}
        \subcaption{\nolinenumbers Some edges need smoothing.}
        \label{fig:geo2}
    \end{subfigure}\hfill
    \caption{(a) overlapping chords introduce unnecessary crossings (see the
        orange chords).  Our chord geometry (b) and (c) uses simple circular
        arcs and effectively avoids unnecessary crossings.}
    \label{fig:geo}
\end{figure}

We call the point where the arc of a node and the curve of one of its adjacent
edges intersect a \emph{port}, and we represent the portion of an edge that is 
between two gate nodes as a vertical line segment.
Given a distance~\dmin, we first find the minimum radius for the buckets such that
(i) any two ports within the same bucket have an arc length of at
least~\dmin,
(ii) there is at least~\dmin space on the circle between any port and any other
node arc,
(iii) the vertical line segments of the edges that connect two gate nodes are at
least~\dmin apart from each other,
(iv) each gate node can have a central angle~$\alpha$ without violating~(i)--(iii).
We draw all buckets with this radius and reserve space for each gate node using~$\alpha$.
We draw the vertex arcs such that they fill the remaining space on each circle.
The size of each vertex arc is proportional to its vertex degree.

We draw the edges as follows.  Firstly, we determine the position of the
ports of each edge.  For each node arc, we uniformly distribute all its ports on
the arc.  We draw edges with both endpoints in the same bucket as follows.
If the ports of the edge lay exactly antipodal, we draw it as a straight line
segment.  Otherwise, we draw it as a circular arc such that at both ports it is
perpendicular to the bucket's circle (see the black chord in \cref{fig:geo1}).
We split edges that span two adjacent buckets into three parts;  two that both
remain entirely inside one bucket and a third that connects them with a vertical
line segment.  We draw the former parts like edges with both endpoints in the
same bucket.

Note that this construction corresponds to the {\em{Poincaré disk model}}
(hyperbolic geometry), where our circular arcs correspond to lines.
It is well known that two lines in the Poincaré disk model intersect at most once.
Two edges that previously did not intersect must have an even number of crossings
between each other. Hence, using circular arcs cannot introduce new crossings.

With the geometry as described so far, some edges bend when transiting a gate
note (see the red chords in \cref{fig:geo2}).  We replace parts of the edge arcs
with Bézier curves to smooth these bends out (see the blue chords in
\cref{fig:geo2}).  Edges that transit through the center of the gate node are
already smooth by construction.  However, depending on the actual geometry,
smoothing can introduce undesired overlaps.  For example, while in
\cref{fig:case-study-us} our default parameters worked fine, we had to
choose a more conservative strategy for \cref{fig:case-study-ger}.

\begin{figure}[b!]
    \begin{subfigure}[b]{0.49\textwidth}
        \centering
        \includegraphics[page=1]{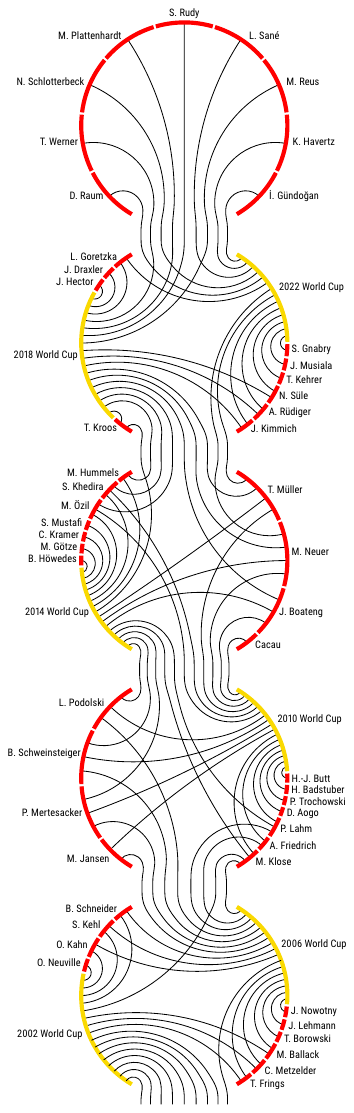}
    \end{subfigure}
    \hfill
    \begin{subfigure}[b]{0.49\textwidth}
        \centering
        \includegraphics[page=3]{side-by-side.pdf}
    \end{subfigure}
    \caption*{\raggedleft\textit{continued on the following page}}
\end{figure}
\begin{figure}[t!]\ContinuedFloat
    \begin{subfigure}[t]{0.49\textwidth}
        \centering
        \includegraphics[page=2,trim={0 100pt 0 0},clip]{side-by-side.pdf}
        \subcaption{Germany men's national football team}
        \label{fig:case-study-ger}
    \end{subfigure}
    \hfill
    \begin{subfigure}[t]{0.49\textwidth}
        \centering
        \includegraphics[page=4,trim={0 100pt 0 0},clip]{side-by-side.pdf}
        \subcaption{\nolinenumbers United States women's national soccer team}
        \label{fig:case-study-us}
    \end{subfigure}
    \caption{Chunky Chains of player appearances at FIFA World Cups between 1990
        and 2022.  (a)~88~players and 9~World Cups with $c = 12$; (b)~64~players
        and 7~World Cups with $c = 8$.}
    \label{fig:case-study}
\end{figure}

\section{Case Study: Player Appearances at FIFA World Cups}\label{sec:case-study}

Some graphs have a natural ordering of their nodes. 
We believe that Chunky Chains can help to reveal it~\cref{tkey:T2}.
For our case study, we created bipartite graphs of football players and FIFA
World Cups where an edge denotes the appearance of a player in the starting squad
of any game at the respective World Cup.  While Wold Cups have a strict temporal
ordering, players appear at different competitions depending on the course of
their career.  We expect that our algorithms are able to order the World Cups
by year only based on the player appearances.  We use the dataset from~\cite{f-wcd-23}.
\Cref{fig:case-study} shows the graphs for two of the teams as Chunky Chains
(see~\cite{supmat} for more).

In both drawings, the World Cups are ordered chronologically.
With a capacity of twelve players per bucket, the drawing in \cref{fig:case-study-ger}
requires no long edge.  The drawing in \cref{fig:case-study-us} contains three
long edges and uses a capacity of eight players per bucket. In both cases, almost
all edges are short~\cref{tkey:M2}. 
Since the crossings are contained in the chord diagrams, and Chunky Chains enforce
a vertical structure, edges are easy to follow.
Chunky Chains benefit from the way long edges are handled.  By drawing them only
partially and excluding them from the main layout process, the remaining graph
can be arranged more freely.

We compare our drawings with three standard graph drawing approaches, (i)~a 
two-track hierarchical layout produced by Graphviz' \texttt{dot}
engine, (ii)~a force-directed layout produced by Graphviz' \texttt{neato} engine,
and (iii)~a matrix representation (see \cref{apx:case-study} and the supplementary
material~\cite{supmat}).

The two-track layout (see \cref{fig:apx:dot})
captures the temporal order of the World Cups well.  Having two strictly separate
tracks, highlights the bipartition of the input.  However, it introduces a large
amount of whitespace and many edge crossings (even some double
crossings), which makes it difficult to follow individual player appearances. 
The edges are also quite long.  In comparison, Chunky Chains utilize space more
efficiently.

The force-directed layout (see \cref{fig:apx:neato}) produces a
compact drawing.  While there is, to a certain extent, a temporal structure in
the drawing, it is not
easily recognizable.  Moreover, edges are often routed in a complicated way.
Partial overlaps of the edges make connections ambiguous and seriously hinder
comprehensibility.

The matrix representation (see \cref{tab:apx:ger,tab:apx:us}) displays the
chronological order of the World Cups and shows the bipartite structure clearly.
Because of the sparsity of our graphs, the space is not used efficiently.  Since
the drawing is relatively long, it is difficult to follow a column and correctly
identify the row of a certain entry.

This case study suggests that Chunky Chains are capable of revealing linear
structures~\cref{tkey:T2} while avoiding some drawbacks of the standard approaches.

\section{Experiments}\label{sec:experiments}

We conducted experiments to answer two research questions.
The first one is, \tkey{RQ1}~do commonly drawn graphs allow for good Chunky Chain drawings?
We have established that graphs with a low bucketwidth (or few long edges with
respect to a given bucket capacity) benefit our drawing style~\cref{tkey:M1}.
Thus, our second question is: \tkey{RQ2}~are the methods presented 
in~\cref{ssec:arrangement} capable of finding good bucket arrangements?

\subparagraph*{Benchmark Sets.}  We used two datasets.  Firstly, we evaluated our
algorithms on the well-known Rome Graphs~\cite{dglttv-ecfgda-cgta97}, a collection
of 11,531 mostly real networks.  Secondly, we generated a synthetic dataset of
1,170 $k$-paths (edge-maximal graphs of pathwidth~$k$) with 10 to 200 vertices
and $k\in\{3,5,7\}$.  From \cref{obs:bi}~(i) we know that bounded pathwidth is not
necessarily helpful when solving \BI and \cref{lem:pw} only shows that pathwidth
is bounded by bucket capacity and number of long edges (\textit{not} vice versa).
Nonetheless, the pathwidth is an established measure for how path-like a graph
is, and therefore it is intuitive that graphs with bounded pathwidth may have good
Chunky Chain visualizations.  \Cref{fig:basics} shows the edge density distribution
for the two datasets.

\begin{figure}[tb]
    \begin{subfigure}[t]{0.48\textwidth}
        \includegraphics[page=1]{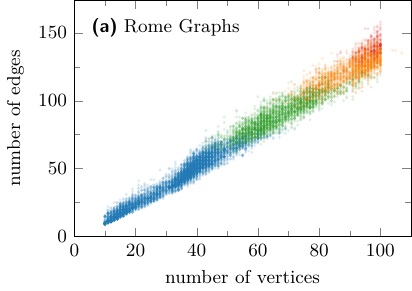}
    \end{subfigure}\hfill
    \begin{subfigure}[t]{0.49\textwidth}
        \includegraphics[page=2]{basics/basics.pdf}
    \end{subfigure}
    \vspace{-1em}
    \caption{Number of vertices and edges for each graph in the two datasets.
    Graphs that could be solved to optimality for $c\in\{4,8,12\}$ within one
    hour have \cbullet{PKdarkblue}~blue markers, for $c\in\{8,12\}$, they
    have \cbullet{PKdarkgreen}~green markers, and for $c=12$, they have
    \cbullet{PKdarkorange}~orange markers. Otherwise, they have
    \cbullet{PKdarkred}~red markers.}
    \label{fig:basics}
\end{figure}

\subparagraph*{Benchmarking Environment.}  We implemented the algorithms from \cref{ssec:arrangement}.
We used Gurobi optimizer 13.0.1\footnote{see \url{https://www.gurobi.com}} and
HiGHS 1.14.0\footnote{see \url{https://highs.dev/}}~\cite{hh-pdrs-mpc18} for
ILP-solving and set a time limit of one hour for both.  We call them
\texttt{Gurobi} and \texttt{HiGHS}, respectively.
For our experiments, we used a set of identical compute nodes each
running Debian 13 with Linux 6.12 on two AMD EPYC™ 9654 96-core processors
alongside 1\,TiB of RAM.  Each solver job had exclusive access to eight cores
and 16\,GiB of RAM (four cores and 8\,GiB of RAM) for Gurobi (HiGHS).  While
Gurobi made use of all eight cores, we could not manage to get HiGHS use more
than two cores.  Both solvers used less than 4\,GiB of RAM with all our instances.

We solve \OBI to obtain a heuristic solution for \BI and use the following 
strategies
to determine the vertex order.  Two are heuristics for bandwidth minimization, 
Cuthill-McKee\cite{cm-rbssm-acm69} (\texttt{cmk}) and node centroid hill
climbing~\cite{lrx-casbmp-hicss04,lrx-fabm-ijait07} (\texttt{nchc}).  The third
one is principal component analysis~\cite{p-pca-pm01} (\texttt{pca})
-- a method for dimensionality reduction -- applied on the all-pairs
shortest path matrix of~$G$.
We order the vertices by the first principal component.

The heuristics perform poorly on instances with many long edges.  We therefore
combine them with the meta-heuristic described in \cref{ssec:arrangement} and
call the combined approaches \texttt{cmk*}, \texttt{nchc*}, and \texttt{pca*},
respectively.

We implemented the ordering heuristics and the meta-heuristic in the Scala programming
language\footnote{see \url{https://scala-lang.org}}.  The experiments concerned
with the heuristics were conducted on a machine running Fedora 43 with Linux 7.0
on a single AMD Ryzen™ 7 7840HS and 64\,GiB of RAM.  While these processors share
the ZEN4 micro-architecture, a difference in performance is expected.  However,
the results shown in \cref{fig:runtime} span seven orders of magnitude, so we
consider it negligible.

% We will publish our implementations as open source software.\tim{do this and add a ref}

\subparagraph*{Results.}  We tried to find optimal bucket arrangements for the
graphs in our datasets and $c\in\{4,8,12\}$.  We removed trivial instances where
$n\leq 2c$.  Somewhat expected, our experiments
show that it is easier to find such arrangements for larger capacities.
The colors in \cref{fig:basics} show which instances we could solve to optimality
within one hour.  Two patterns become apparent.  Firstly, the stripes of equal
color (i.e., instances that could be solved for the same capacities) are almost
horizontal.  This suggests that the number of edges, rather than the number of
vertices, characterizes more difficult instances.
Secondly, the inherent structure of the \tfspaths seems to help to solve \BI for
significantly larger instances (in terms of number of edges).
\Cref{fig:cml-rome,fig:cml-kpaths} show that instances that require many long
edges are also harder to solve.

\begin{figure}[tbp]
    \begin{subfigure}{\textwidth}
        \subcaption{\centering Rome Graphs\qquad\fbox{\small
            \fakemarko{PKdarkblue}\,\crect{PKdarkblue}\,\fakemarko{PKlightblue}\,\crect{PKlightblue}~$c=4$\qquad
            \fakemarkp{PKdarkgreen}\,\crect{PKdarkgreen}\,\fakemarkp{PKlightgreen}\,\crect{PKlightgreen}~$c=8$\qquad
            \fakemarkx{PKdarkorange}\,\crect{PKdarkorange}\,\fakemarkx{PKlightorange}\,\crect{PKlightorange}~$c=12$}
            \label{fig:cml-rome}}
        \begin{tikzpicture}
            \node (main) at (0,0) {\includegraphics[page=1]{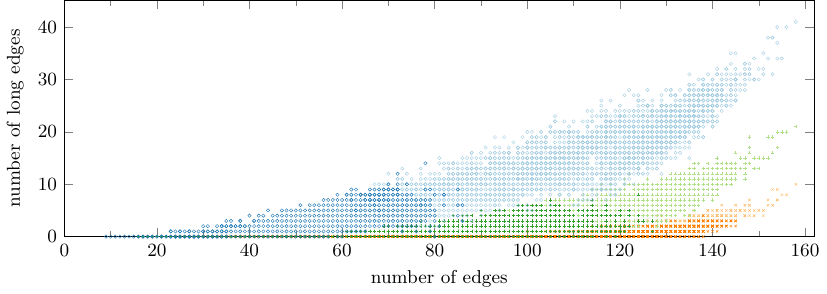}};
            \node (overlay) at (-33mm,10mm) {\includegraphics[page=2]{cml/cml.pdf}};
        \end{tikzpicture}
    \end{subfigure}
    \begin{subfigure}{\textwidth}
        \subcaption{\centering \tfspaths\qquad\fbox{\small
            \fakemarko{PKdarkblue}\,\crect{PKdarkblue}\,\fakemarko{PKlightblue}\,\crect{PKlightblue}~$c=4$\qquad
            \fakemarkp{PKdarkgreen}\,\crect{PKdarkgreen}\,\fakemarkp{PKlightgreen}\,\crect{PKlightgreen}~$c=8$\qquad
            \fakemarkx{PKdarkorange}\,\crect{PKdarkorange}\,\fakemarkx{PKlightorange}\,\crect{PKlightorange}~$c=12$}
            \label{fig:cml-kpaths}}
        \begin{tikzpicture}
            \node (main) at (0,0) {\includegraphics[page=3]{cml/cml.pdf}};
            \node (overlay) at (-32mm,11.5mm) {\includegraphics[page=4]{cml/cml.pdf}};
        \end{tikzpicture}
    \end{subfigure}
    \caption{Number of long edges~$\ell$ with~$c\in\{4,8,12\}$ for the two datasets.
        Darker shades of the colors represent solutions for which we could prove
        optimality, lighter ones heuristic results.}
    \label{fig:cml-combined}
\end{figure}

\begin{figure}[tbp]
    \centering
    \includegraphics{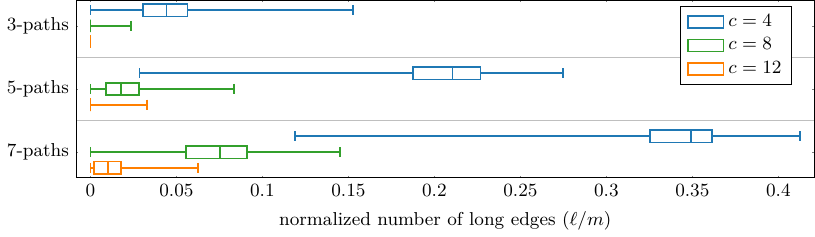}
    %\vspace{-2em}
    \caption{Percentage of long edges ($\ell/m$) with~$c\in\{4,8,12\}$ for the
        \tfspaths.  The lower and upper whiskers mark the minimum and
        maximum, respectively.}
    \label{fig:cmlk}
\end{figure}

Answering~\cref{tkey:RQ1}, we can see in \cref{fig:cml-rome} that the majority
of the Rome Graphs can be drawn without long edges using capacity~8 and with less
than ten long edges using capacity~4.  \Cref{fig:cmlk} shows the normalized
number of long edges for the \tfspaths.  The box plots show little dispersion
which advocates a linear correlation between the number of long edges and the
number of edges, for constant capacity~$c$ and constant pathwidth.  We also see
that at least empirically, there is a connection between these parameters.

Regarding~\cref{tkey:RQ2}, we assess the quality of our methods by measuring the
normalized number of unnecessarily long edges, $(\ell_{\mathrm{heur}}-\ell_{\mathrm{opt}})/m$,
where~$\ell_{\mathrm{heur}}$ is the number of long edges produced by the
respective (heuristic) method and~$\ell_{\mathrm{opt}}$ is the optimum.  We
compared instances where we could not prove optimality to the best result of all
methods.  \Cref{fig:heur-rome,fig:heur-kpaths} show the results.

\begin{figure}[tbp]
    \begin{subfigure}{\textwidth}\centering
        \subcaption{Rome Graphs\hfill\fbox{\small
            \fakeareas{PKlightred}~\texttt{cmk}\quad
            \fakeareas{PKlightpurple}~\texttt{nchc}\quad
            \fakeareas{PKlightcyan}~\texttt{pca}\quad
            \fakeareas{PKdarkred}~\texttt{cmk*}\quad
            \fakeareas{PKdarkpurple}~\texttt{nchc*}\quad
            \fakeareas{PKdarkcyan}~\texttt{pca*}\quad
            \fakeareas{PKdarkpink}~\texttt{Gurobi}\quad
            \fakeareas{PKlightbrown}~\texttt{HiGHS}}}\vspace{3pt}
        \includegraphics[page=1]{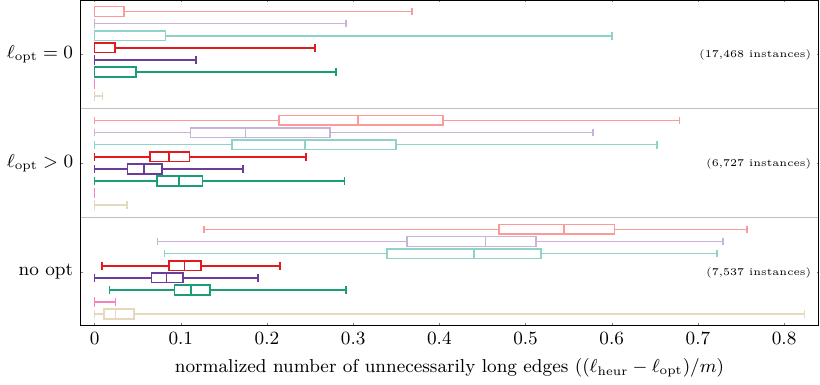}
        \label{fig:heur-rome} 
    \end{subfigure}
    \begin{subfigure}{\textwidth}\centering
        \subcaption{\tfspaths\hfill\fbox{\small
            \fakeareas{PKlightred}~\texttt{cmk}\quad
            \fakeareas{PKlightpurple}~\texttt{nchc}\quad
            \fakeareas{PKlightcyan}~\texttt{pca}\quad
            \fakeareas{PKdarkred}~\texttt{cmk*}\quad
            \fakeareas{PKdarkpurple}~\texttt{nchc*}\quad
            \fakeareas{PKdarkcyan}~\texttt{pca*}\quad
            \fakeareas{PKdarkpink}~\texttt{Gurobi}\quad
            \fakeareas{PKlightbrown}~\texttt{HiGHS}}}\vspace{3pt}
        \includegraphics[page=2]{heur/heur.pdf}%\vspace{-0.3em}
        \label{fig:heur-kpaths}
    \end{subfigure}
    \caption{Quality of the heuristics on the two datasets measured as percentage
        of unnecessarily long edges $\left((\ell_{\mathrm{heur}}-\ell_{\mathrm{opt}})/m\right)$.
        The bottommost set of box plots (no opt) contains instances that we could
        not solve to optimality.  We compare them to the best known result.}
    \label{fig:heur-combined}
\end{figure}

If $\ell_{\mathrm{opt}}=0$, our heuristics find the optimal solution more often
than not.  Using the meta-heuristic is universally beneficial.  The difference
is particularly pronounced when~$\ell$ is large.  At least on the Rome Graphs,
\texttt{HiGHS} reliably finds (near-)optimal results.  It is, however, very slow
(see \cref{fig:runtime}).  Of the heuristics, \texttt{nchc*} outperforms the others.
While not quite as good, \texttt{cmk*} is much faster.

\begin{figure}[tbp]
    \centering
    \fbox{\small
        \fakelines{PKlightred}~\texttt{cmk}\quad
        \fakelines{PKlightpurple}~\texttt{nchc}\quad
        \fakelines{PKlightcyan}~\texttt{pca}\quad
        \fakelines{PKdarkred}~\texttt{cmk*}\quad
        \fakelines{PKdarkpurple}~\texttt{nchc*}\quad
        \fakelines{PKdarkcyan}~\texttt{pca*}\quad
        \fakelined{PKdarkpink}~\texttt{Gurobi}\quad
        \fakelined{PKlightbrown}~\texttt{HiGHS}}\vspace{3pt}
    \begin{subfigure}[t]{0.5\textwidth}
        \includegraphics[page=1]{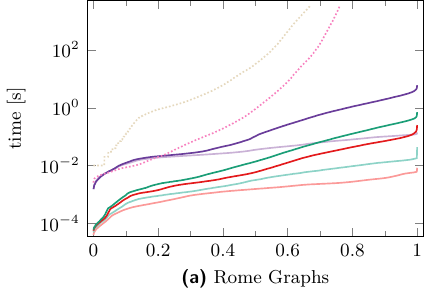}
    \end{subfigure}
    \hfill
    \begin{subfigure}[t]{0.48\textwidth}
        \includegraphics[page=2]{runtime/runtime.pdf}
    \end{subfigure}\vspace{-0.3em}
    \caption{Running times per share of the dataset.  The y-axis is logarithmic.}
    \label{fig:runtime}
\end{figure}

\section{Conclusion and Future Work}\label{sec:conclusion}
We introduced a new visualization paradigm called Chunky Chains which
is suitable for displaying graphs on narrow screens. 
We presented exact algorithms and a pipeline for drawing Chunky Chains
in practice.  Our experiments show that many real-world graphs can be
drawn as Chunky Chains with small capacity.
However, we leave several questions open.
\begin{itemize}
    \item Solvers have trouble finding good bounds with our ILP formulation. 
        The large number of variables with non-integral values suggests that 
        the model is weak. Can we strengthen it?
    \item Is there a faster exact algorithm for minimizing crossings
        given a bucket arrangement?
    \item The meta-heuristic for \BI removes edges until it finds a bucket
        arrangement with zero long edges.  Sometimes, terminating earlier gives
        a better solution.  Can we identify such cases beforehand?
    \item Can all outerplanar graphs be drawn as planar Chunky Chains?
    \item We can make Chunky Chains even narrower when we use confluent edges
        (see \cref{apx:confluent}).  How does this impact the comprehensibility
        of our drawing style?  Is it worth the increased cognitive load?
\end{itemize}

\pdfbookmark[1]{References}{References}
\bibliography{abbrv,literature}

\clearpage\appendix

\section{Spatial Color Co-occurrence Graphs}\label{apx:colorgraphs}

Spatial color co-occurrence graphs are inspired by Art Palette\footnote{see
\url{https://artsexperiments.withgoogle.com/artpalette/}}, a tool for searching
art collections for artworks that match a certain color combination.  Our use
case, however, is different.  We are interested in colors that often occur in
close proximity to each other.  We visualize them with Chunky Chains (see the
examples shown in \cref{fig:teaser,fig:apx:colorgraphs})\footnote{Find digital
versions of “Starry Night Over the Rhône” at
\url{https://artsandculture.google.com/asset/uQE3XORhSK37Dw}
and of “Still Life with Asphodels” at
\url{https://artsandculture.google.com/asset/rwHGE7FnTOc2wQ}.}.

\begin{figure}[tb]
    \centering
    \includegraphics{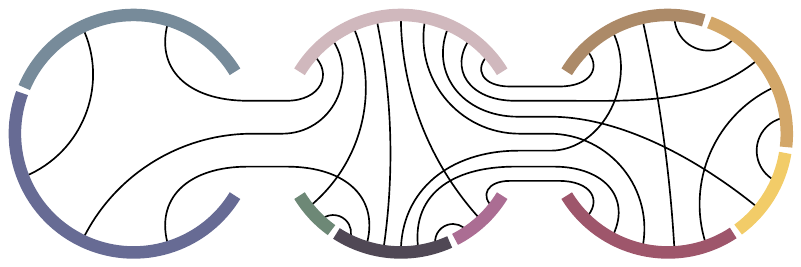}
    \caption{Chunky Chain visualization of spatial color co-occurrences in Henri
        Matisse's “Still Life with Asphodels”.}
    \label{fig:apx:colorgraphs}
\end{figure}

We process the artworks as follows.  We first downscale the image in order to
speed up later steps and to smooth out artifacts like craquelure.  Then, we
apply a quantization filter\footnote{We use the implementation provided with
\url{https://sksamuel.github.io/scrimage/}.} based on the octtree
algorithm~\cite{gp-cqoq-cgi88} to reduce the image to a small set of colors.
We now build a co-occurrence matrix of colors by going through all pairs of
adjacent pixels and count how often color~$a$ is next to color~$b$.  We remove
all but the largest entries such that the remaining entries account for 95\,\%
of the pixel transitions.  Finally, we interpret the matrix as adjacency matrix
and keep the largest connected component.

This type of graph can be used by artists as inspiration to find harmonic
color combinations based on previous work.
Also, such drawings can facilitate master studies of artists (and artworks) as
contrasts, color schemes, the color key (major or minor), and value grouping 
may be easily revealed with our drawings.

\section{Confluent Chunky Chains}\label{apx:confluent}

With conventional Chunky Chains, the drawing's width is $O(c^2)$, for a given
capacity~$c$. We can reduce the width to $O(c)$ by allowing confluent edges. 
We assign edges to “tracks” such that,
intuitively, two vertices are connected if a train can travel along that track,
i.e., it does not make sharp turns.  See \cref{fig:confluent} for a confluent
version of \cref{fig:teaser}.

\begin{figure}[tb]
    \centering
    \includegraphics[page=2]{starry-night.pdf}
    \caption{The graph from \cref{fig:teaser} drawn as a confluent Chunky Chain
        (or Cobwebby Chain).}
    \label{fig:confluent}
\end{figure}

Observe that for any two adjacent buckets~$B_i$ and $B_{i+1}$, if we restrict~$G$
to only the edges connecting vertices in~$B_i$ with vertices in~$B_{i+1}$, the
resulting subgraph~$G'$ is bipartite.  We use a technique presented by Eppstein,
Goodrich, and Meng~\cite{egm-cld-gd04,egm-cld-alg07} to draw~$G'$ confluently.
The number of tracks necessary equals the size of a biclique cover of~$G'$.  Note
that the size of the smaller bipartition bounds the size of the biclique cover.
Finding a biclique cover of minimum size, however, is \NP-hard, even for bipartite
graphs~\cite{o-cgwc-im77}.

While drawing edges confluently can obviously help with reducing the number of
gate transitions~\cref{tkey:M2}, it can also reduce the number of
crossings~\cref{tkey:M3}.

\section{Supplemental Figures Concerning \texorpdfstring{\cref{sec:case-study}}{Case Study}}
\label{apx:case-study}

See \cref{fig:apx:dot,fig:apx:neato,tab:apx:ger,tab:apx:us} on the following pages.

\begin{figure}[p]
    \centering
    \includegraphics[page=1]{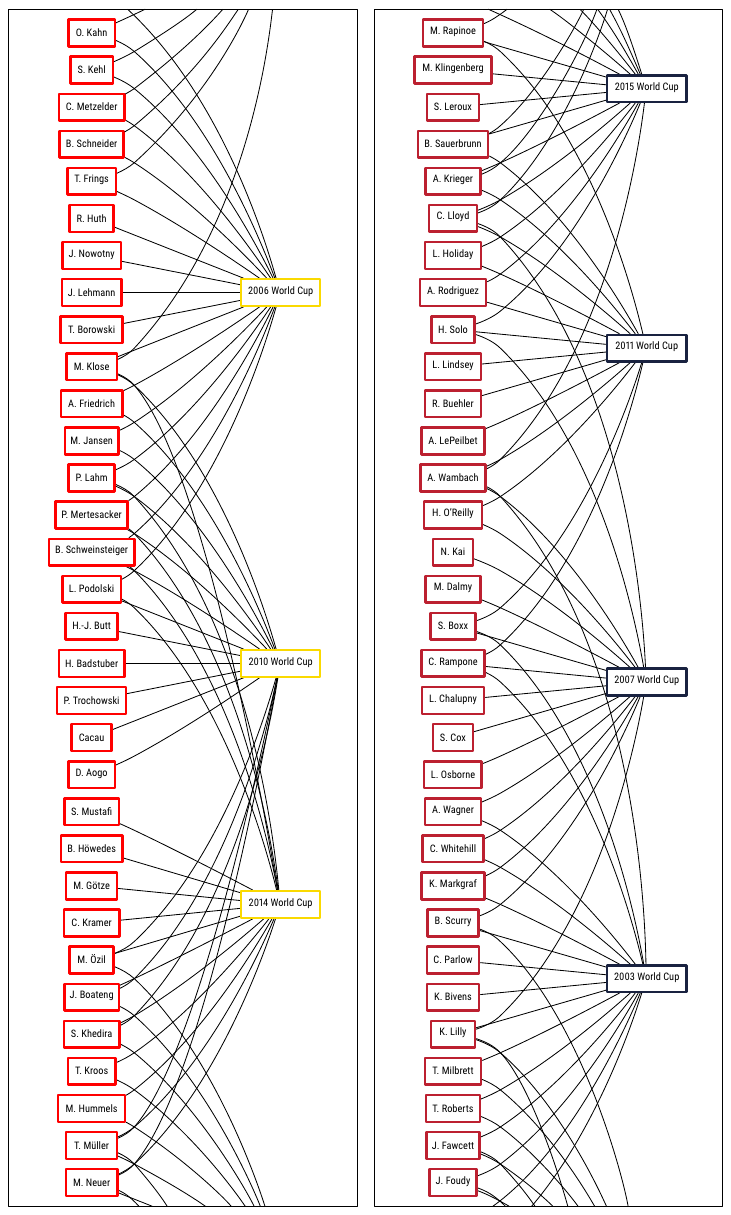}
    \caption{Details of two drawings of the graphs from \cref{fig:case-study}
        using the \texttt{dot} engine for hierarchical layouts.}
    \label{fig:apx:dot}
\end{figure}

\begin{figure}[p]
    \begin{subfigure}{\textwidth}
        \centering
        \includegraphics[page=2]{case-study.pdf}
        \subcaption{\centering Germany men’s national football team (same graph as \cref{fig:case-study-ger}).}
        \label{fig:apx:neato-ger}
    \end{subfigure}\vspace{1em}
    \begin{subfigure}{\textwidth}
        \centering
        \includegraphics[page=3]{case-study.pdf}
        \subcaption{\centering United States women’s national soccer team (same graph as \cref{fig:case-study-us}).}
        \label{fig:apx:neato-us}
    \end{subfigure}
    \caption{Two drawings of the graphs from \cref{fig:case-study} using the
        \texttt{neato} engine for force-directed layouts.}
    \label{fig:apx:neato}
\end{figure}

\begin{table}[p]
    \centering\footnotesize
    \caption{Matrix representation of the graph in \cref{fig:case-study-ger}.}
    \label{tab:apx:ger}
    \input{./graphics/tab-fifa-wc-deu.tex}
    \vspace{1em}
    \makebox[\textwidth][r]{\textit{continued on the following page}}
\end{table}
\begin{table}[p]
    \centering\footnotesize\ContinuedFloat
    \input{./graphics/tab-fifa-wc-deu2.tex}
\end{table}

\begin{table}[p]
    \centering\footnotesize
    \caption{Matrix representation of the graph in \cref{fig:case-study-us}.}
    \label{tab:apx:us}
    \input{./graphics/tab-fifa-wc-us.tex}
\end{table}

\end{document}

%% file: graphics/tab-fifa-wc-deu.tex
\bgroup\arrayrulecolor{PKlightgray}
\begin{tabular}{l c|c|c|c|c|c|c|c|c}
     & \rot{2022 World Cup} & \rot{2018 World Cup} & \rot{2014 World Cup} & \rot{2010 World Cup} & \rot{2006 World Cup} & \rot{2002 World Cup} & \rot{1998 World Cup} & \rot{1994 World Cup} & \rot{1990 World Cup} \\
N. Süle & \cbullet{DEUred} & \cbullet{DEUred} &    &    &    &    &    &    &    \\
K. Havertz & \cbullet{DEUred} &    &    &    &    &    &    &    &    \\
L. Sané & \cbullet{DEUred} &    &    &    &    &    &    &    &    \\
D. Raum & \cbullet{DEUred} &    &    &    &    &    &    &    &    \\
L. Goretzka & \cbullet{DEUred} & \cbullet{DEUred} &    &    &    &    &    &    &    \\
A. Rüdiger & \cbullet{DEUred} & \cbullet{DEUred} &    &    &    &    &    &    &    \\
S. Rudy &    & \cbullet{DEUred} &    &    &    &    &    &    &    \\
J. Kimmich & \cbullet{DEUred} & \cbullet{DEUred} &    &    &    &    &    &    &    \\
İ. Gündoğan & \cbullet{DEUred} &    &    &    &    &    &    &    &    \\
J. Draxler &    & \cbullet{DEUred} &    &    &    &    &    &    &    \\
S. Gnabry & \cbullet{DEUred} &    &    &    &    &    &    &    &    \\
N. Schlotterbeck & \cbullet{DEUred} &    &    &    &    &    &    &    &    \\
J. Musiala & \cbullet{DEUred} &    &    &    &    &    &    &    &    \\
T. Kehrer & \cbullet{DEUred} &    &    &    &    &    &    &    &    \\
M. Plattenhardt &    & \cbullet{DEUred} &    &    &    &    &    &    &    \\
T. Werner &    & \cbullet{DEUred} &    &    &    &    &    &    &    \\
J. Hector &    & \cbullet{DEUred} &    &    &    &    &    &    &    \\
M. Reus &    & \cbullet{DEUred} &    &    &    &    &    &    &    \\
T. Kroos &    & \cbullet{DEUred} & \cbullet{DEUred} &    &    &    &    &    &    \\
M. Neuer & \cbullet{DEUred} & \cbullet{DEUred} & \cbullet{DEUred} & \cbullet{DEUred} &    &    &    &    &    \\
M. Hummels &    & \cbullet{DEUred} & \cbullet{DEUred} &    &    &    &    &    &    \\
M. Özil &    & \cbullet{DEUred} & \cbullet{DEUred} & \cbullet{DEUred} &    &    &    &    &    \\
T. Müller & \cbullet{DEUred} & \cbullet{DEUred} & \cbullet{DEUred} & \cbullet{DEUred} &    &    &    &    &    \\
J. Boateng &    & \cbullet{DEUred} & \cbullet{DEUred} & \cbullet{DEUred} &    &    &    &    &    \\
S. Khedira &    & \cbullet{DEUred} & \cbullet{DEUred} & \cbullet{DEUred} &    &    &    &    &    \\
Cacau &    &    &    & \cbullet{DEUred} &    &    &    &    &    \\
M. Götze &    &    & \cbullet{DEUred} &    &    &    &    &    &    \\
C. Kramer &    &    & \cbullet{DEUred} &    &    &    &    &    &    \\
S. Mustafi &    &    & \cbullet{DEUred} &    &    &    &    &    &    \\
B. Höwedes &    &    & \cbullet{DEUred} &    &    &    &    &    &    \\
P. Lahm &    &    & \cbullet{DEUred} & \cbullet{DEUred} & \cbullet{DEUred} &    &    &    &    \\
M. Jansen &    &    &    & \cbullet{DEUred} & \cbullet{DEUred} &    &    &    &    \\
P. Mertesacker &    &    & \cbullet{DEUred} & \cbullet{DEUred} & \cbullet{DEUred} &    &    &    &    \\
P. Trochowski &    &    &    & \cbullet{DEUred} &    &    &    &    &    \\
H.-J. Butt &    &    &    & \cbullet{DEUred} &    &    &    &    &    \\
M. Klose &    &    & \cbullet{DEUred} & \cbullet{DEUred} & \cbullet{DEUred} & \cbullet{DEUred} &    &    &    \\
A. Friedrich &    &    &    & \cbullet{DEUred} & \cbullet{DEUred} &    &    &    &    \\
B. Schweinsteiger &    &    & \cbullet{DEUred} & \cbullet{DEUred} & \cbullet{DEUred} &    &    &    &    \\
L. Podolski &    &    & \cbullet{DEUred} & \cbullet{DEUred} & \cbullet{DEUred} &    &    &    &    \\
D. Aogo &    &    &    & \cbullet{DEUred} &    &    &    &    &    \\
H. Badstuber &    &    &    & \cbullet{DEUred} &    &    &    &    &    \\
C. Metzelder &    &    &    &    & \cbullet{DEUred} & \cbullet{DEUred} &    &    &    \\
O. Neuville &    &    &    &    &    & \cbullet{DEUred} &    &    &    \\
O. Kahn &    &    &    &    & \cbullet{DEUred} & \cbullet{DEUred} &    &    &    \\
S. Kehl &    &    &    &    & \cbullet{DEUred} & \cbullet{DEUred} &    &    &    \\
\end{tabular}
\egroup

%% file: graphics/tab-fifa-wc-deu2.tex
\bgroup\arrayrulecolor{PKlightgray}
\begin{tabular}{l c|c|c|c|c|c|c|c|c}
     & \rot{2022 World Cup} & \rot{2018 World Cup} & \rot{2014 World Cup} & \rot{2010 World Cup} & \rot{2006 World Cup} & \rot{2002 World Cup} & \rot{1998 World Cup} & \rot{1994 World Cup} & \rot{1990 World Cup} \\
R. Huth &    &    &    &    & \cbullet{DEUred} &    &    &    &    \\
J. Nowotny &    &    &    &    & \cbullet{DEUred} &    &    &    &    \\
J. Lehmann &    &    &    &    & \cbullet{DEUred} &    &    &    &    \\
T. Frings &    &    &    &    & \cbullet{DEUred} & \cbullet{DEUred} &    &    &    \\
T. Borowski &    &    &    &    & \cbullet{DEUred} &    &    &    &    \\
C. Ramelow &    &    &    &    &    & \cbullet{DEUred} &    &    &    \\
M. Ballack &    &    &    &    & \cbullet{DEUred} & \cbullet{DEUred} &    &    &    \\
B. Schneider &    &    &    &    & \cbullet{DEUred} & \cbullet{DEUred} &    &    &    \\
M. Bode &    &    &    &    &    & \cbullet{DEUred} &    &    &    \\
J. Jeremies &    &    &    &    &    & \cbullet{DEUred} & \cbullet{DEUred} &    &    \\
C. Ziege &    &    &    &    &    & \cbullet{DEUred} & \cbullet{DEUred} &    &    \\
C. Jancker &    &    &    &    &    & \cbullet{DEUred} &    &    &    \\
T. Linke &    &    &    &    &    & \cbullet{DEUred} &    &    &    \\
D. Hamann &    &    &    &    &    & \cbullet{DEUred} & \cbullet{DEUred} &    &    \\
M. Rehmer &    &    &    &    &    & \cbullet{DEUred} &    &    &    \\
M. Tarnat &    &    &    &    &    &    & \cbullet{DEUred} &    &    \\
O. Bierhoff &    &    &    &    &    &    & \cbullet{DEUred} &    &    \\
T. Helmer &    &    &    &    &    &    & \cbullet{DEUred} & \cbullet{DEUred} &    \\
L. Matthäus &    &    &    &    &    &    & \cbullet{DEUred} & \cbullet{DEUred} & \cbullet{DEUred} \\
T. Häßler &    &    &    &    &    &    & \cbullet{DEUred} & \cbullet{DEUred} & \cbullet{DEUred} \\
O. Thon &    &    &    &    &    &    & \cbullet{DEUred} &    & \cbullet{DEUred} \\
J. Heinrich &    &    &    &    &    &    & \cbullet{DEUred} &    &    \\
A. Möller &    &    &    &    &    &    & \cbullet{DEUred} & \cbullet{DEUred} &    \\
J. Klinsmann &    &    &    &    &    &    & \cbullet{DEUred} & \cbullet{DEUred} & \cbullet{DEUred} \\
S. Reuter &    &    &    &    &    &    & \cbullet{DEUred} &    & \cbullet{DEUred} \\
J. Kohler &    &    &    &    &    &    & \cbullet{DEUred} & \cbullet{DEUred} & \cbullet{DEUred} \\
M. Babbel &    &    &    &    &    &    & \cbullet{DEUred} &    &    \\
C. Wörns &    &    &    &    &    &    & \cbullet{DEUred} &    &    \\
A. Köpke &    &    &    &    &    &    & \cbullet{DEUred} &    &    \\
T. Strunz &    &    &    &    &    &    &    & \cbullet{DEUred} &    \\
S. Effenberg &    &    &    &    &    &    &    & \cbullet{DEUred} &    \\
A. Brehme &    &    &    &    &    &    &    & \cbullet{DEUred} & \cbullet{DEUred} \\
G. Buchwald &    &    &    &    &    &    &    & \cbullet{DEUred} & \cbullet{DEUred} \\
T. Berthold &    &    &    &    &    &    &    & \cbullet{DEUred} & \cbullet{DEUred} \\
B. Illgner &    &    &    &    &    &    &    & \cbullet{DEUred} & \cbullet{DEUred} \\
R. Völler &    &    &    &    &    &    &    & \cbullet{DEUred} & \cbullet{DEUred} \\
U. Bein &    &    &    &    &    &    &    &    & \cbullet{DEUred} \\
K.-H. Riedle &    &    &    &    &    &    &    & \cbullet{DEUred} & \cbullet{DEUred} \\
M. Wagner &    &    &    &    &    &    &    & \cbullet{DEUred} &    \\
M. Sammer &    &    &    &    &    &    &    & \cbullet{DEUred} &    \\
H. Pflügler &    &    &    &    &    &    &    &    & \cbullet{DEUred} \\
K. Augenthaler &    &    &    &    &    &    &    &    & \cbullet{DEUred} \\
P. Littbarski &    &    &    &    &    &    &    &    & \cbullet{DEUred}
\end{tabular}
\egroup

%% file: graphics/tab-fifa-wc-us.tex
\bgroup\arrayrulecolor{PKlightgray}
\begin{tabular}{l c|c|c|c|c|c|c}
     & \rot{2019 World Cup} & \rot{2015 World Cup} & \rot{2011 World Cup} & \rot{2007 World Cup} & \rot{2003 World Cup} & \rot{1995 World Cup} & \rot{1991 World Cup} \\
T. Davidson & \cbullet{USAblue} &    &    &    &    &    &    \\
C. Dunn & \cbullet{USAblue} &    &    &    &    &    &    \\
L. Horan & \cbullet{USAblue} &    &    &    &    &    &    \\
A. Dahlkemper & \cbullet{USAblue} &    &    &    &    &    &    \\
C. Press & \cbullet{USAblue} & \cbullet{USAblue} &    &    &    &    &    \\
K. O'Hara & \cbullet{USAblue} & \cbullet{USAblue} &    &    &    &    &    \\
M. Pugh & \cbullet{USAblue} &    &    &    &    &    &    \\
J. Ertz & \cbullet{USAblue} & \cbullet{USAblue} &    &    &    &    &    \\
T. Heath & \cbullet{USAblue} & \cbullet{USAblue} &    &    &    &    &    \\
M. Brian & \cbullet{USAblue} & \cbullet{USAblue} &    &    &    &    &    \\
S. Mewis & \cbullet{USAblue} &    &    &    &    &    &    \\
R. Lavelle & \cbullet{USAblue} &    &    &    &    &    &    \\
B. Sauerbrunn & \cbullet{USAblue} & \cbullet{USAblue} & \cbullet{USAblue} &    &    &    &    \\
M. Klingenberg &    & \cbullet{USAblue} &    &    &    &    &    \\
A. Krieger & \cbullet{USAblue} & \cbullet{USAblue} & \cbullet{USAblue} &    &    &    &    \\
M. Rapinoe & \cbullet{USAblue} & \cbullet{USAblue} & \cbullet{USAblue} &    &    &    &    \\
L. Holiday &    & \cbullet{USAblue} & \cbullet{USAblue} &    &    &    &    \\
A. Naeher & \cbullet{USAblue} &    &    &    &    &    &    \\
A. Rodriguez &    & \cbullet{USAblue} & \cbullet{USAblue} &    &    &    &    \\
A. Morgan & \cbullet{USAblue} & \cbullet{USAblue} &    &    &    &    &    \\
S. Leroux &    & \cbullet{USAblue} &    &    &    &    &    \\
C. Lloyd & \cbullet{USAblue} & \cbullet{USAblue} & \cbullet{USAblue} & \cbullet{USAblue} &    &    &    \\
H. Solo &    & \cbullet{USAblue} & \cbullet{USAblue} & \cbullet{USAblue} &    &    &    \\
N. Kai &    &    &    & \cbullet{USAblue} &    &    &    \\
H. O'Reilly &    &    & \cbullet{USAblue} & \cbullet{USAblue} &    &    &    \\
A. Wambach &    & \cbullet{USAblue} & \cbullet{USAblue} & \cbullet{USAblue} & \cbullet{USAblue} &    &    \\
R. Buehler &    &    & \cbullet{USAblue} &    &    &    &    \\
A. LePeilbet &    &    & \cbullet{USAblue} &    &    &    &    \\
C. Rampone &    &    & \cbullet{USAblue} & \cbullet{USAblue} & \cbullet{USAblue} &    &    \\
S. Boxx &    &    & \cbullet{USAblue} & \cbullet{USAblue} & \cbullet{USAblue} &    &    \\
L. Lindsey &    &    & \cbullet{USAblue} &    &    &    &    \\
K. Markgraf &    &    &    & \cbullet{USAblue} & \cbullet{USAblue} &    &    \\
L. Chalupny &    &    &    & \cbullet{USAblue} &    &    &    \\
C. Parlow &    &    &    &    & \cbullet{USAblue} &    &    \\
C. Whitehill &    &    &    & \cbullet{USAblue} & \cbullet{USAblue} &    &    \\
M. Dalmy &    &    &    & \cbullet{USAblue} &    &    &    \\
A. Wagner &    &    &    & \cbullet{USAblue} & \cbullet{USAblue} &    &    \\
S. Cox &    &    &    & \cbullet{USAblue} &    &    &    \\
L. Osborne &    &    &    & \cbullet{USAblue} &    &    &    \\
B. Scurry &    &    &    & \cbullet{USAblue} & \cbullet{USAblue} & \cbullet{USAblue} &    \\
T. Roberts &    &    &    &    & \cbullet{USAblue} & \cbullet{USAblue} &    \\
K. Lilly &    &    &    & \cbullet{USAblue} & \cbullet{USAblue} & \cbullet{USAblue} & \cbullet{USAblue} \\
M. Hamm &    &    &    &    & \cbullet{USAblue} & \cbullet{USAblue} & \cbullet{USAblue} \\
T. Milbrett &    &    &    &    & \cbullet{USAblue} & \cbullet{USAblue} &    \\
B. Chastain &    &    &    &    & \cbullet{USAblue} &    & \cbullet{USAblue} \\
K. Bivens &    &    &    &    & \cbullet{USAblue} &    &    \\
J. Foudy &    &    &    &    & \cbullet{USAblue} & \cbullet{USAblue} & \cbullet{USAblue} \\
J. Fawcett &    &    &    &    & \cbullet{USAblue} & \cbullet{USAblue} & \cbullet{USAblue} \\
S. Higgins &    &    &    &    &    &    & \cbullet{USAblue} \\
S. Webber &    &    &    &    &    & \cbullet{USAblue} &    \\
T. Venturini &    &    &    &    &    & \cbullet{USAblue} &    \\
H. Manthei &    &    &    &    &    & \cbullet{USAblue} &    \\
T. Staples &    &    &    &    &    & \cbullet{USAblue} &    \\
L. Hamilton &    &    &    &    &    & \cbullet{USAblue} & \cbullet{USAblue} \\
D. Belkin &    &    &    &    &    &    & \cbullet{USAblue} \\
A. Cromwell &    &    &    &    &    & \cbullet{USAblue} &    \\
C. Overbeck &    &    &    &    &    & \cbullet{USAblue} & \cbullet{USAblue} \\
C. Jennings-Gabarra &    &    &    &    &    & \cbullet{USAblue} & \cbullet{USAblue} \\
W. Gebauer &    &    &    &    &    &    & \cbullet{USAblue} \\
T. Bates &    &    &    &    &    &    & \cbullet{USAblue} \\
M. Akers &    &    &    &    &    & \cbullet{USAblue} & \cbullet{USAblue} \\
L. Henry &    &    &    &    &    &    & \cbullet{USAblue} \\
M. Harvey &    &    &    &    &    &    & \cbullet{USAblue} \\
A. Heinrichs &    &    &    &    &    &    & \cbullet{USAblue}
\end{tabular}
\egroup